\newtheorem{assumption}{Assumption}[section]
\newtheorem{remark}{Remark}[section]
\def\ban{\begin{eqnarray*}}
\def\ean{\end{eqnarray*}}
\def\bna{\begin{eqnarray}}
\def\ena{\end{eqnarray}}
\title{Distributed order estimation of ARX model under cooperative
excitation condition\thanks{This work was supported by the National Key R\&D Program of China under Grant 2018YFA0703800, National Natural Science Foundation of
China under Grant 11688101,  Natural Science Foundation of Shandong
Province (ZR2020ZD26), and  the Strategic Priority Research Program of Chinese Academy of Sciences under Grant No. XDA27000000.\newline
\hspace*{2em} D. Gan and Z. X. Liu are  with the  Key Laboratory of Systems and Control, Academy of Mathematics and Systems Science, Chinese Academy of Sciences,  and School of Mathematical Sciences, University of Chinese Academy of Sciences, Beijing, P. R. China.
(gandie@amss.ac.cn, lzx@amss.ac.cn)}}
\author{Die Gan, Zhixin Liu\thanks{ Corresponding author.}}
\begin{document}

\maketitle

% REQUIRED
\begin{abstract}
  In this paper, we consider the distributed estimation problem of a linear stochastic system described by an autoregressive model
with exogenous inputs (ARX)
 when both the system orders and parameters are unknown. We design distributed algorithms to estimate the unknown orders and parameters by combining the proposed local information criterion (LIC) with the distributed least squares method. The simultaneous estimation for both the system orders and parameters brings challenges for the theoretical analysis. Some analysis techniques, such as double array martingale limit theory, stochastic Lyapunov functions, and martingale convergence theorems are employed. For the case where the upper bounds of the true orders are available, we introduce a cooperative excitation condition, under which
the strong consistency of the estimation for the orders and parameters is established. Moreover, for the case where the upper
bounds of true orders are unknown, similar distributed algorithm is proposed to estimate both the orders and parameters, and
the corresponding convergence analysis for the proposed algorithm is provided. We remark that our results are obtained without
relying on the independency or stationarity assumptions of regression vectors, and the cooperative excitation conditions can show
that all sensors can cooperate to fulfill the estimation task even though any individual sensor can not.
\end{abstract}

% REQUIRED
\begin{keywords}
distributed order estimation, cooperative excitation condition, distributed least squares, convergence
\end{keywords}

% REQUIRED
\begin{AMS}
  68W15, 93B30, 93E24
\end{AMS}

\section{Introduction}

The statistical models are widely used in almost every field of engineering and science, and how to choose or identify an appropriate statistical models to fit observations is an important issue.  The order estimation of  statistical models is one of the key steps to construct the models. In fact, the investigation of the order estimation has many applications in engineering systems, such as radar \cite{wwc8}, power systems \cite{wwc9}, real seismic  traces \cite{ooorder3} and physiological systems \cite{ooorder4}.

In order  to estimate the order of the statistical models, some criterions are proposed  including  AIC (Akaikes Information Criterion) \cite{wwc4}, BIC (Bayesian Information Criterion) \cite{wwc5},  CIC \cite{wc1} ( the first ``C" emphasizes that the criterion is designed for feedback control systems) and their variants \cite{wwc3}. Based on these information criterions, considerable progresses have been made on the order estimation in time series analysis and adaptive estimation and control (e.g.,\cite{wwc12}-\cite{wwc13}). Some theoretical results are also obtained for the order estimation problem.
For example, Hannan and  Kavalieris in \cite{ooorder5} introduced an algorithm  to estimate the model orders and
system parameters, and the convergence of the algorithm were obtained with stationary input sequence.
Chen and Guo in \cite{wwc3} introduced a modification of the BIC criterion to estimate the order of the multidimensional  ARX  system, where the true orders are assumed to belong to a known finite set. Furthermore, the relevant results for the estimation of the system orders were generalized in \cite{wc5}  to the case where the upper bounds of the true orders are unknown. After that, some development for the order estimation problem are provided (e.g.,  \cite{wwc16}-\cite{wwc17}). Recently, the genetic algorithm \cite{ooorder1} and neural networks \cite{ooorder2} were developed for model order estimation problem with good performance. However, the effectiveness of the proposed algorithms in \cite{ooorder1} and \cite{ooorder2} was verified by some simulation examples without rigorous theoretical analysis.

 %Zhao and Chen in proved the strong consistency of the order estimation of autoregressive moving-average with exogenous input (ARMAX) processes for independently identically distributed input vectors. Ninness inconsidered the Cram$\acute{e}$r-Rao lower bound for model-based spectral density estimation in the case of joint model and model-order estimation with stationary stochastic processes.
%
%However,  many works of existing literature for the order estimation require some sort of independence, stationarity or  ergodicity of the stochastic processes (see e.g.,).

%
%However, they only provided simulations for the effectiveness of the proposed algorithms, and lack of rigorous theoretical analysis.
%We remark that the aforementioned works were carried out in the traditional single sensor case. To the best our knowledge, there is no studies to handle order
%estimation problem in sensor networks.
%
%These results can not be directly applied to systems operating under feedback. The first reference to address the order estimation problem of feedback control systems has been proposed

%The estimation or filtering problems play an important role in stochastic systems because almost all practical systems are subject to
%noisy measurements.
Over the past decade, with the development of communication technology and computer science, wireless sensor networks have attracted increasing attention of researchers due primarily to their practical applications in engineering systems, such as
intelligent transportation and machine health monitoring \cite{wwc31}.
We know that in a sensor network, each sensor can only measure partial
information of the system due to its limited sensing capacity.
In order to estimate the unknown states and system parameters by using data from sensor networks, the centralized and distributed methods are two common schemes, where the latter is gaining increasing
popularity because of scalability, privacy and robustness against node and link
failures. In  distributed algorithms, each sensor only needs to communicate with its neighboring sensors in a certain domain. Some strategies including incremental strategies \cite{wwc21},  consensus
strategies \cite{wwc22}, diffusion strategies \cite{new}, and combination of them \cite{wwc26} are proposed to construct the distributed algorithms. Based on these strategies, the performance analysis of the distributed estimation algorithms are investigated, for example, the consensus-based least mean squares (LMS) (e.g.,\cite{LMS}\cite{LMS2}), the  diffusion stochastic gradient descent algorithm \cite{SG}, the diffusion Kalman filter (e.g., \cite{wwc25}\cite{wwc30}), the diffusion least squares (LS) (e.g., \cite{wwc29}-\cite{oorder3}), the diffusion forgetting factor recursive least squares  \cite{ooorder6}. Most of the corresponding theoretical results are established by requiring the independency, stationarity or Gaussian assumptions for the regression vectors  due to the mathematical difficulty in analyzing
the product of random matrices. However, these requirements are hard to be satisfied since the regression signals may be correlated due to multi-path effect or feedback. In order to avoid using the independency and stationarity conditions of the regressors, some attempts are made for some distributed estimation algorithms. For the time-invariant unknown parameter, Xie and Guo  studied the diffusion LS algorithm and established the convergence result in \cite{wc2}. For time-varying unknown parameter, they investigated the consensus-based and diffusion LMS algorithm, and proposed the corresponding cooperative information condition to guarantee the  stability of the algorithm (e.g.,  \cite{wwc22}\cite{tac} ). For the diffusion Kalman filter algorithm, we introduced the collective random observability condition and provided the stability analysis of distributed Kalman filter algorithm in \cite{wwc30}. We see that the analysis of  all these results are established with known system orders.
How to construct and analyze the distributed algorithms when the system orders are unknown brings challenges for us.

In this paper,  we investigate the distributed estimation problem
 of linear stochastic systems described by an ARX  model with unknown system orders and parameters.
The estimates for the orders of each sensor are obtained by minimizing the proposed LIC, and the estimates for unknown system parameters are derived by the  distributed least squares method where the system orders are replaced by the estimates of orders. The challenges in the theoretical analysis focus on the effect caused by the system noises and the coupled relationship between the estimates of system orders and  parameters. We introduce some mathematical tools including the double array martingale limit theory, martingale convergence theorems, and the stochastic Lyapunov functions to study the convergence of the proposed distributed algorithms. The main contributions of this paper are summarized as follows.
\begin{itemize}
\item
For the case where the true orders have known
upper bounds, we design a distributed algorithm to simultaneously estimate both the system  orders and parameters by minimizing the proposed LIC and using the distributed LS method. A cooperative excitation condition is introduced to reflect the joint effect of multiple sensors: the estimation task can be still completed by the  cooperation of the sensor networks even if any individual sensor can not. Under the cooperative excitation condition, the strong consistency of the estimates for both system orders and parameters is established.

\item
For the case where the upper bounds of true orders are unknown, a similar distributed algorithm is proposed where the growth rate for the upper bounds of the system orders are characterized by a nondecreasing positive function. We employ the double array martingale limit theory to deal with the difficulty arising in analyzing  the cumulative effect of the system noises.   The convergence analysis for system orders and parameters can also be provided.

\item The theoretical results obtained in this paper do not require the
assumptions of the independency and stationarity of the regression signals as used in almost all theoretical analysis of the distributed algorithms, which makes it
possible for applications to the stochastic feedback systems.
\end{itemize}

% The outline is not required, but we show an example here.
The rest of this paper is organized as follows. We introduce some preliminaries including graph theory and the observation model in \cref{section2}.
In \cref{www12}, we establish the convergence results when the upper bounds of the true orders are available. The case where the upper bounds of the true orders are unknown is investigated in \cref{section4}. We present the conclusion of the paper in \cref{section5}.

\section{Problem Formulation}\label{section2}
\subsection{Some Preliminaries}
In this paper, we use $\bm A\in\mathbb{R}^{m\times n}$ to denote an  $m\times n$-dimensional real matrix.
 For a matrix $\bm A$, we use $\lambda_{\max}(\cdot)$ and $\lambda_{\min}(\cdot)$ to denote the largest and smallest eigenvalue of the matrix.
  $\|\bm A\|$ denotes the Euclidean norm, i.e., $\|\bm A\|=(\lambda_{max}(\bm A\bm A^T))^{\frac{1}{2}}$, where the notation $T$ denotes the transpose operator.  We  use  $det(\cdot)$ to denote the determinant of the corresponding matrix.
For a symmetric matrix $\bm A$, if all eigenvalues of $\bm A$ are positive (or nonnegative), then it is a positive definite (semipositive) matrix. Suppose that $\bm A\in\mathbb{R}^{n\times n}$and $\bm B\in\mathbb{R}^{m\times m}$ are two symmetric matrices, and $\bm C$ is an $n\times m$-dimensional matrix. Then by the Rayleigh quotient of the symmetric matrix, we can easily obtain the following inequality,
\bna\label{mineig}
\lambda_{\min}\left(
\begin{matrix}
\bm A & \bm C \\
\bm C^T & \bm B
 \end{matrix}
 \right)\leq \lambda_{\min}(\bm A).
\ena
The matrix inversion formula is used in our analysis, and we list it here.
\begin{lemma}\cite{wc3} \label{wl1}
For any matrices $\bm A$, $\bm B$,  $\bm C$ and  $\bm D$ with suitable dimensions, the following formula
\ban
(\bm A+\bm B\bm D\bm C)^{-1}=\bm A^{-1}-\bm A^{-1}\bm B(\bm D^{-1}+\bm C\bm A^{-1}\bm B)^{-1}\bm C \bm A^{-1}
\ean
holds, provided that the relevant matrices are invertible.
\end{lemma}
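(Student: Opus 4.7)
The plan is to verify the identity by direct multiplication, since the right-hand side is explicit. Denote the candidate inverse by $\bm M = \bm A^{-1}-\bm A^{-1}\bm B(\bm D^{-1}+\bm C\bm A^{-1}\bm B)^{-1}\bm C\bm A^{-1}$, and show that $(\bm A+\bm B\bm D\bm C)\bm M=\bm I$; the companion identity $\bm M(\bm A+\bm B\bm D\bm C)=\bm I$ then follows by the same algebraic manipulation read in the transpose, so one-sided checking suffices.

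First I would distribute $(\bm A+\bm B\bm D\bm C)$ across the two summands defining $\bm M$, using $(\bm A+\bm B\bm D\bm C)\bm A^{-1}=\bm I+\bm B\bm D\bm C\bm A^{-1}$. The claim then reduces to the equality
$$
(\bm B+\bm B\bm D\bm C\bm A^{-1}\bm B)(\bm D^{-1}+\bm C\bm A^{-1}\bm B)^{-1}\bm C\bm A^{-1}=\bm B\bm D\bm C\bm A^{-1}.
$$

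The key algebraic move is the factorization $\bm B+\bm B\bm D\bm C\bm A^{-1}\bm B=\bm B\bm D(\bm D^{-1}+\bm C\bm A^{-1}\bm B)$, after which the middle factor cancels the inverse and leaves precisely $\bm B\bm D\bm C\bm A^{-1}$, completing the verification. There is no real obstacle beyond careful bookkeeping; the invertibility of $\bm A$, $\bm D$, and $\bm D^{-1}+\bm C\bm A^{-1}\bm B$ assumed in the statement is exactly what is needed to make every factor appearing in the computation well-defined, and to ensure $\bm M$ itself is meaningful.
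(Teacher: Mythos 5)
Your verification is correct: distributing $(\bm A+\bm B\bm D\bm C)$ over $\bm M$, the factorization $\bm B+\bm B\bm D\bm C\bm A^{-1}\bm B=\bm B\bm D(\bm D^{-1}+\bm C\bm A^{-1}\bm B)$ does cancel the inverse factor and leaves $\bm I$, exactly as you claim. The paper itself offers no proof of this lemma --- it is quoted from the cited reference --- so there is nothing to compare against; your argument supplies the omitted computation. One small remark: since $\bm A+\bm B\bm D\bm C$ is square (as $\bm A$ is invertible), the one-sided identity $(\bm A+\bm B\bm D\bm C)\bm M=\bm I$ already forces $\bm M$ to be the two-sided inverse, so the appeal to ``the same manipulation read in the transpose'' is not needed (and as stated it is slightly loose, since transposing produces the identity for $\bm A^T+\bm C^T\bm D^T\bm B^T$ rather than for the original matrix).
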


If all elements of a matrix $\bm A=\{a_{ij}\}\in\mathbb{R}^{n\times n}$ are nonnegative, then it is a nonnegative matrix, and furthermore if  $\sum^n_{j=1}a_{ij}=1$ holds for all $i\in\{1,\cdots,n\}$, then it is called a stochastic matrix.

Let $\{\bm A_k\}$  be a matrix sequence and $\{ b_k\}$  be a positive scalar
sequence. Then by $\bm A_k = O(b_k)$ we mean that there exists a constant $C > 0$ such that
$\| \bm A_k\|  \leq  C b_k, ~\forall k \geq  0$, and by $\bm A_k=o(b_k)$ we mean that $\lim_{k\rightarrow \infty}\|\bm A_k\|/b_k=0$.

In this paper, our purpose is to estimate both system orders and parameters in a distributed way and establish the corresponding convergence results. We use an undirected  graph ${\mathcal{G}}=(\mathcal{V},\mathcal{E})$  to describe the relationship between sensors where $\mathcal{V}$ is the set of sensors and $\mathcal{E}$ is the edge set. The adjacency matrix $\mathcal{A}=\{a_{ij}\}\in\mathbb{R}^{n\times n}$ is introduced to reflect the weight of the corresponding edge. The elements of $\mathcal{A}$ satisfy:  $a_{ij}>0$ if $(i, j)\in \mathcal{E}$ and $a_{ij}=0$ otherwise. The set of the neighbors of sensor $i$ is denoted as $N_i=\{j\in \mathcal{V}|(i,j)\in\mathcal{E} \}$. A path of length $\ell$ is a sequence of $\ell+1$ sensors such that the subsequent senors are connected. The graph {$\mathcal{G}$} is called connected if for any two sensors $i$ and $j$, there is a path connecting them. The diameter {$D_{\mathcal{G}}$} of the graph {$\mathcal{G}$} is defined as the maximum length of the path between any two sensors. For simplicity of analysis, the convergence of the estimates in this paper is considered under the condition that the weighted adjacency matrix $\mathcal{A}$ is symmetric and stochastic. Thus, it is obvious that $\mathcal{A}$ is doubly stochastic.

\subsection{Observation model}

We consider a network composed of $n$ sensors. At each time instant $t ~(t=0,1,2,\cdots)$, the input signal $u_{t,i}\in \mathbb{R}$ and the output signal $y_{t,i}\in \mathbb{R}$  of  sensor  $i\in\{1,...,n\}$ are assumed to obey the following linear stochastic ARX model,
\begin{gather}
y_{t+1,i}=b_1y_{t,i}+\cdots+b_{p_0}y_{t+1-{p_0},i}+c_1u_{t,i}+\cdots+c_{q_0}u_{t+1-q_0,i}+w_{t+1,i},\label{www14}\\
y_{t,i}=0, ~~ u_{t,i}=0, ~~\hbox{for}~ ~t\leq0,\nonumber
\end{gather}
where $\{w_{t,i}\}$ is a noise process, $p_0$, $q_0$ are unknown true orders $(b_{p_0}\neq0, c_{q_0}\neq0)$ and $b_1, \cdots, b_{p_0}$, $ c_1,\cdots, c_{q_0} $ are unknown parameters.

Denote the unknown parameter vector $\bm \theta(p,q)$ and the corresponding regression vector $\bm \varphi_{t,i}(p,q)$ as
\bna
&& \bm \theta(p,q)=[b_1,\cdots,b_{p}, c_1,\cdots, c_{q}]^T,\\ &&
\bm \varphi_{t,i}(p,q)=[y_{t,i},\cdots,y_{t+1-{p},i},u_{t,i},\cdots,u_{t+1-q,i}]^T.
\ena
If $p>p_0$, then $b_j=0$ for $p_0<j\leq p$ ; and if $q>q_0$, then $c_m=0$ for $q_0<m\leq q$.
The regression model (\ref{www14}) can be rewritten as
\bna
y_{t+1,i}&=&\bm\theta^T(p,q)\bm \varphi_{t,i}(p,q)+w_{t+1,i}\ \ (\hbox{for}\  \ p\geq p_0\ \  \hbox{and}\ \ q\geq q_0)\label{dynamics} \\&=&\bm\theta^T(p_0,q_0)\bm \varphi_{t,i}(p_0,q_0)+w_{t+1,i}\label{2.6}.
\ena

The purpose of this paper is to design the distributed algorithm for each sensor by using the local information from its neighbors to estimate both  the system orders $p_0, q_0$ and the parameter vector $\bm\theta(p_0,q_0)$ . We know that for the case where the system orders $p_0, q_0$ are known, the distributed LS algorithm
is one of the most basic algorithms to estimate the unknown parameters, and it has wide applications because of its fast convergence rate, e.g., in the area of cloud technologies (e.g., \cite{order1}). The details of the distributed LS algorithm can be found in the following \Cref{algorithm1} (see \cite{wc2}).

\begin{algorithm}[!ht]
\caption{Distributed LS Algorithm}\label{algorithm1}
For any given $i\in\{1,\cdots,n\}$ and given system order $(p,q)$, begin with an initial estimate $\bm \theta_{0,i}(p,q)$, and an
initial positive definite matrix $\bm P_{0,i}(p,q)$. The distributed LS algorithm is recursively defined at time instant $t\geq0$ as follows.

1: Adaptation.
\bna
\bar{\bm \theta}_{t+1,i}(p,q)&=&\bm \theta_{t,i}(p,q)+d_{t,i}(p,q)\bm P_{t,i}(p,q)\bm \varphi_{t,i}(p,q)\nonumber\\
&&\cdot(y_{t+1,i}-\bm \varphi^T_{t,i}(p,q)\bm \theta_{t,i}(p,q)),\quad \label{ws1}\\
\bar{\bm P}_{t+1,i}(p,q)&=&\bm P_{t,i}(p,q)-d_{t,i}(p,q)\bm P_{t,i}(p,q)\bm \varphi_{t,i}(p,q)\bm \varphi^T_{t,i}(p,q)\bm P_{t,i}(p,q),\label{ws2}\\
d_{t,i}(p,q)&=&[1+\bm \varphi^T_{t,i}(p,q)\bm P_{t,i}(p,q)\bm \varphi_{t,i}(p,q)]^{-1}\label{ws3},
\ena

2: Diffusion.
\bna
&\bm P^{-1}_{t+1,i}(p,q)&=\sum_{j\in N_i}a_{ij}\bar{\bm P}^{-1}_{t+1,j}(p,q),\label{ws4}\\
&\bm \theta_{t+1,i}(p,q)&=\bm P_{t+1,i}(p,q)\sum_{j\in N_i}a_{ij}\bar{\bm P}^{-1}_{t+1,j}(p,q)\bar{\bm \theta}_{t+1,j}(p,q).\label{ws5}
\ena
\end{algorithm}

In this section, for any given $(p, q)$, the estimation error between the true parameter and the estimate obtained by \Cref{algorithm1} is denoted as $\widetilde{\bm \theta}_{t,i}(p,q)$,

\bna
\widetilde{\bm \theta}_{t,i}(p,q)=[b_1-b^i_{1,t},\cdots,b_{p}-b^i_{p,t}, c_1-c^i_{1,t},\cdots, c_{q}-c^i_{q,t}]^T,\label{error}
\ena
where $\{b^i_{j,t}\}^{p}_{j=1}$ and  $\{c^i_{r,t}\}^q_{r=1}$ are denoted as the estimates of the corresponding
components of $\bm \theta_{t,i}(p,q)$ obtained by \Cref{algorithm1}.

We have the following result on the estimation error $\widetilde{\bm \theta}_{t,i}(p,q)$, which will be helpful for the subsequent theoretical analysis.

\begin{lemma}\label{wl2}
For $p\geq p_0$ and $q\geq q_0$, the following equation holds,
\bna\label{we1}
~~~~~~~\bm P^{-1}_{t+1,i}(p,q)\widetilde{\bm \theta}_{t+1,i}(p,q)=\sum_{j\in N_i}a_{ij}\bm P^{-1}_{t,j}(p,q)\widetilde{\bm \theta}_{t,j}(p,q)-\sum_{j\in N_i}a_{ij}\bm \varphi_{t,j}(p,q)w_{t+1,j}.
\ena
\end{lemma}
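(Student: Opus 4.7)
The plan is to combine the adaptation step with the diffusion step, obtaining first an identity for the ``pre-diffusion'' error $\bar{\widetilde{\bm\theta}}_{t+1,i}(p,q) := \bm\theta(p,q) - \bar{\bm\theta}_{t+1,i}(p,q)$ weighted by $\bar{\bm P}^{-1}_{t+1,i}(p,q)$, and then propagating this through the diffusion combiner. Since the argument $(p,q)$ is fixed throughout with $p\geq p_0$ and $q\geq q_0$, I will suppress it from the notation. The key algebraic ingredient is that in this regime the true system admits the representation $y_{t+1,j}=\bm\varphi_{t,j}^{T}\bm\theta+w_{t+1,j}$, by (\ref{dynamics}).

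First, subtract the adaptation equation (\ref{ws1}) from $\bm\theta$ and substitute the innovation $y_{t+1,i}-\bm\varphi_{t,i}^{T}\bm\theta_{t,i}=\bm\varphi_{t,i}^{T}\widetilde{\bm\theta}_{t,i}+w_{t+1,i}$. This yields
\begin{equation*}
\bar{\widetilde{\bm\theta}}_{t+1,i}=\bigl(\bm I-d_{t,i}\bm P_{t,i}\bm\varphi_{t,i}\bm\varphi_{t,i}^{T}\bigr)\widetilde{\bm\theta}_{t,i}-d_{t,i}\bm P_{t,i}\bm\varphi_{t,i}w_{t+1,i}.
\end{equation*}
Using (\ref{ws2}) the prefactor equals $\bar{\bm P}_{t+1,i}\bm P_{t,i}^{-1}$, so multiplying on the left by $\bar{\bm P}_{t+1,i}^{-1}$ gives
\begin{equation*}
\bar{\bm P}_{t+1,i}^{-1}\bar{\widetilde{\bm\theta}}_{t+1,i}=\bm P_{t,i}^{-1}\widetilde{\bm\theta}_{t,i}-\bar{\bm P}_{t+1,i}^{-1}d_{t,i}\bm P_{t,i}\bm\varphi_{t,i}\,w_{t+1,i}.
\end{equation*}

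The next step—which I expect to be the main algebraic obstacle—is to show that $\bar{\bm P}_{t+1,i}^{-1}d_{t,i}\bm P_{t,i}\bm\varphi_{t,i}=\bm\varphi_{t,i}$. Applying the matrix inversion formula of \Cref{wl1} to (\ref{ws2}) with the choice $\bm A=\bm P_{t,i}^{-1}$, $\bm B=\bm\varphi_{t,i}$, $\bm D=1$, $\bm C=\bm\varphi_{t,i}^{T}$, together with the definition of $d_{t,i}$ in (\ref{ws3}), yields the standard identity $\bar{\bm P}_{t+1,i}^{-1}=\bm P_{t,i}^{-1}+\bm\varphi_{t,i}\bm\varphi_{t,i}^{T}$. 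Then
\begin{equation*}
\bar{\bm P}_{t+1,i}^{-1}d_{t,i}\bm P_{t,i}\bm\varphi_{t,i}=d_{t,i}\bm\varphi_{t,i}\bigl(1+\bm\varphi_{t,i}^{T}\bm P_{t,i}\bm\varphi_{t,i}\bigr)=\bm\varphi_{t,i},
\end{equation*}
by (\ref{ws3}). Consequently,
\begin{equation*}
\bar{\bm P}_{t+1,i}^{-1}\bar{\widetilde{\bm\theta}}_{t+1,i}=\bm P_{t,i}^{-1}\widetilde{\bm\theta}_{t,i}-\bm\varphi_{t,i}w_{t+1,i}.
\end{equation*}

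Finally, I fold in the diffusion step. From (\ref{ws4}) and (\ref{ws5}),
\begin{equation*}
\bm P_{t+1,i}^{-1}\bm\theta_{t+1,i}=\sum_{j\in N_i}a_{ij}\bar{\bm P}_{t+1,j}^{-1}\bar{\bm\theta}_{t+1,j},\qquad \bm P_{t+1,i}^{-1}\bm\theta=\sum_{j\in N_i}a_{ij}\bar{\bm P}_{t+1,j}^{-1}\bm\theta,
\end{equation*}
so subtracting gives
\begin{equation*}
\bm P_{t+1,i}^{-1}\widetilde{\bm\theta}_{t+1,i}=\sum_{j\in N_i}a_{ij}\bar{\bm P}_{t+1,j}^{-1}\bar{\widetilde{\bm\theta}}_{t+1,j}.
\end{equation*}
Substituting the pre-diffusion identity obtained above (applied at each neighbor $j$) yields the claimed recursion (\ref{we1}). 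All of the effort is in the matrix inversion step; the remainder is bookkeeping that leverages the linear structure of both the adaptation and the $\bm P^{-1}$-weighted combination in the diffusion step.
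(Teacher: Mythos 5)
Your proof is correct and follows essentially the same route as the paper's: both hinge on the identities $\bar{\bm P}_{t+1,i}\bm P_{t,i}^{-1}=\bm I-d_{t,i}\bm P_{t,i}\bm\varphi_{t,i}\bm\varphi_{t,i}^{T}$, $\bar{\bm P}_{t+1,i}^{-1}=\bm P_{t,i}^{-1}+\bm\varphi_{t,i}\bm\varphi_{t,i}^{T}$, and $d_{t,i}(1+\bm\varphi_{t,i}^{T}\bm P_{t,i}\bm\varphi_{t,i})=1$, then push the result through the $\bm P^{-1}$-weighted diffusion step. The only cosmetic difference is that you subtract the true parameter and invoke the model before the diffusion step, whereas the paper carries $\bar{\bm P}^{-1}_{t+1,i}\bar{\bm\theta}_{t+1,i}=\bm P^{-1}_{t,i}\bm\theta_{t,i}+\bm\varphi_{t,i}y_{t+1,i}$ through diffusion first and subtracts at the end.
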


\begin{proof}
For simplicity of expression, we use $d_{t,i}$, $\bm\varphi_{t,i}$, $\bm P_{t,i}$, $\bar{\bm P}_{t+1,i} $, $ \bar{\bm \theta}_{t+1,i}$, $\widetilde{\bm \theta}_{t,i}$ and  $\bm \theta_{t+1,i}$ to denote $d_{t,i}(p, q)$, $\bm\varphi_{t,i}(p, q)$, $\bm P_{t,i}(p, q)$, $\bar{\bm P}_{t+1,i}(p, q)$, $ \bar{\bm \theta}_{t+1,i}(p, q)$, $\widetilde{\bm \theta}_{t,i}(p,q)$ and  $\bm \theta_{t+1,i}(p, q)$. By (\ref{ws3}), we have
\bna
d_{t,i}=1-d_{t,i}\bm \varphi^T_{t,i}\bm P_{t,i}\bm \varphi_{t,i}.\label{wnew2}
\ena
Combining this with (\ref{ws1}) and (\ref{ws2}), we have
\ban
\bar{\bm \theta}_{t+1,i}&=&(\bm I-d_{t,i}\bm P_{t,i}\bm \varphi_{t,i}\bm \varphi^T_{t,i})\bm \theta_{t,i}
+d_{t,i}\bm P_{t,i}\bm \varphi_{t,i}y_{t+1,i}\\
&=&(\bm I-d_{t,i}\bm P_{t,i}\bm \varphi_{t,i}\bm \varphi^T_{t,i})\bm \theta_{t,i}
+\bm P_{t,i}\bm \varphi_{t,i}(1-d_{t,i}\bm \varphi^T_{t,i}\bm P_{t,i}\bm \varphi_{t,i})y_{t+1,i}\\
&=&(\bm P_{t,i}-d_{t,i}\bm P_{t,i}\bm \varphi_{t,i}\bm \varphi^T_{t,i}\bm P_{t,i})\bm P^{-1}_{t,i}\bm \theta_{t,i}
+(\bm P_{t,i}-d_{t,i}\bm P_{t,i}\bm \varphi_{t,i}\bm \varphi^T_{t,i}\bm P_{t,i})\bm \varphi_{t,i}y_{t+1,i}\\
&=&\bar{\bm P}_{t+1,i}\bm P^{-1}_{t,i}\bm \theta_{t,i}+\bar{\bm P}_{t+1,i}\bm \varphi_{t,i}y_{t+1,i}.
\ean
Hence we have
\ban
{\bar{\bm P}}^{-1}_{t+1,i}\bar{\bm \theta}_{t+1,i}=\bm P^{-1}_{t,i}\bm \theta_{t,i}+\bm \varphi_{t,i}y_{t+1,i}.
\ean
Substituting this equation into (\ref{ws5}) yields
\bna
\bm P^{-1}_{t+1,i}\bm \theta_{t+1,i}=\sum_{j\in N_i}a_{ij}(\bm P^{-1}_{t,j}\bm \theta_{t,j}+\bm \varphi_{t,j}y_{t+1,j})\label{w6}.
\ena
By (\ref{ws2}) and \Cref{wl1}, we have
\bna
{\bar{\bm P}}^{-1}_{t+1,i}={{\bm P}}^{-1}_{t,i}+\bm\varphi_{t,i}\bm\varphi^T_{t,i}.\label{w8}
\ena
Hence by (\ref{ws4}), (\ref{w6}) and (\ref{w8}), we have
\ban
&&\bm P^{-1}_{t+1,i}\widetilde{\bm \theta}_{t+1,i}=\bm P^{-1}_{t+1,i}\bm\theta- \bm P^{-1}_{t+1,i}{\bm \theta}_{t+1,i}\\
&=&\sum_{j\in N_i}a_{ij}{\bar{\bm P}}^{-1}_{t+1,j}\bm\theta-  \sum_{j\in N_i}a_{ij}(\bm P^{-1}_{t,j}\bm \theta_{t,j}+\bm \varphi_{t,j}y_{t+1,j})\\
&=&\sum_{j\in N_i}a_{ij}({{\bm P}}^{-1}_{t,j}+\bm\varphi_{t,j}\bm\varphi^T_{t,j})\bm\theta-  \sum_{j\in N_i}a_{ij}(\bm P^{-1}_{t,j}\bm \theta_{t,j}+\bm \varphi_{t,j}\bm\varphi^T_{t,j}\bm\theta+\bm \varphi_{t,j}w_{t+1,j})\\
&=&\sum_{j\in N_i}a_{ij}\bm P^{-1}_{t,j}\widetilde{\bm \theta}_{t,j}-\sum_{j\in N_i}a_{ij}\bm \varphi_{t,j}w_{t+1,j},
\ean
which completes the proof of the lemma.
\end{proof}

For the case where the system orders $p_0, q_0$ are known, Xie and Guo in \cite{wc2} proved that the distributed LS algorithm can converge to the true parameters almost surely (a.s.) under a cooperative excitation condition. However, when the system orders $p_0, q_0$ are unknown, the estimation for both the system orders and the parameters makes the design and analysis of the distributed algorithms quite complicated. We will deal with such a problem in the following two sections.

\section{Case I: The upper bounds of true orders are known}\label{www12}

In this section, we will first design the distributed algorithm to estimate both the system orders $(p_0, q_0)$  and the parameter vector $\bm \theta(p_0,q_0)$  for the case where the system orders have known upper bounds, i.e.,
\ban
(p_0, q_0)\in M\triangleq\{(p,q), 0\leq p\leq p^*, 0\leq q\leq q^*\},
\ean
where $p^*$ and $q^*$  are known upper bounds of the system orders.

For convenience of analysis, we introduce some  notations and assumptions,
 \ban &&\bm d_t(p,q)=diag\{d_{t,1}(p,q),...d_{t,n}(p,q)\},\nonumber\\
  &&\bm \Phi_t(p,q)= diag\{\bm\varphi_{t,1}(p,q),...\bm\varphi_{t,n}(p,q)\},\nonumber\\
  &&\bm W_t=[w_{t,1},...w_{t,n}]^{T},\nonumber\\
&&\bm P_t(p,q)= diag\{\bm P_{t,1}(p,q),...,\bm P_{t,n}(p,q)\},\nonumber\\
  &&\bar{\bm P}_t(p,q)=diag\{\bar{\bm P}_{t,1}(p,q),...,\bar{\bm P}_{t,n}(p,q)\},\nonumber\\
  &&\widetilde{\bm \Theta}_t(p,q)=col\{\widetilde{\bm \theta}_{t,1}(p,q),...,\widetilde{\bm \theta}_{t,n}(p,q)\},\ean
where $col(\cdot,\cdots,\cdot)$ denotes a vector stacked by the specified vectors, and $diag(\cdot,\cdots,\cdot)$ denotes a block matrix formed in a diagonal manner of the corresponding vectors or matrices.

In order to propose and further analyze the distributed algorithm used to estimate both the system order and the parameters, we introduce some assumptions on the network topology, and the observation noise and the regression vectors.

\begin{assumption}\label{a1}
The communication graph $\mathcal{G}$ is connected.
\end{assumption}

\begin{remark}\label{remark31} Denote $\mathcal{A}^l\triangleq[a_{ij}^{(l)}]$ { with $\mathcal{A}$ being the weighted adjacency matrix of graph $\mathcal{G}$,  i.e., $a_{ij}^{(l)}$
is the $(i,j)$-th entry of
the matrix $\mathcal{A}^l, l\geq 1$ and  $a_{ij}^{(1)}=a_{ij}$.}   Under \Cref{a1}, we can easily obtain that $\mathcal{A}^l$ is a positive matrix for $l\geq D_{\mathcal{G}}$ by the theory of product of stochastic matrices, which means that  $a_{ij}^{(l)}>0$ for any $i$ and $j$. \end{remark}

\begin{assumption}\label{a2}
For any $i\in\{1,\cdots,n\}$, the noise sequence $\{w_{t,i},\mathscr{F}_t\}$ is a martingale difference sequence where
$\mathscr{F}_t$  is a sequence of nondecreasing $\sigma$-algebras generated by $\{y_{k,i}, u_{k,i}, k\leq t, i=1,2,,\cdots, n\}$, and
 there exists a constant $\beta > 2$
such that
\ban
\sup_{t\geq 0}E[|w_{t+1,i}|^{\beta}|\mathscr{F}_t]<\infty, {\rm a.s.}
\ean
where $E[\cdot|\cdot]$ denotes  the conditional expectation operator.
\end{assumption}

\begin{assumption}\label{a5}{\rm (Cooperative Excitation Condition I).}
There exists a sequence $\{a_t\}$ of  positive real numbers  satisfying ${a_t}\xrightarrow[{t\rightarrow\infty}] {}$ $\infty$ and
\bna
\frac{\log {r_t(p^*,q^*)}}{a_t}\xrightarrow[{t\rightarrow\infty}] {} 0, \ \ \ \ \frac{ {a_t}}{\lambda^{p,q}_{\min}(t)}\xrightarrow[{t\rightarrow\infty}] {} 0,\ \ \hbox{for}\ \ (p,q)\in M^* ~~~{\rm a.s.}
\ena
where $M^*=\{(p_0,q^*),(p^*,q_0)\}$, $r_t(p,q)=\lambda_{\max}\{\bm P^{-1}_0(p,q)\}+\sum^n_{i=1}\sum^t_{k=0}\|\bm\varphi_{k,i}(p,q)\|^2$ and
\ban
\lambda^{p,q}_{\min}(t)= \lambda_{\min}\left\{\sum^n_{j=1}\bm P^{-1}_{0,j}(p,q)+\sum^n_{j=1}\sum ^{t-D_{\mathcal{G}+1}}_{k=0}\bm \varphi_{k,j}(p,q){\bm \varphi}^T_{k,j}(p,q)\right\}.
\ean
\end{assumption}

\begin{remark} We give an explanation for the choice of $\{a_t\}$ in \Cref{a5} for two typical cases: (I) If the regression vectors $\bm\varphi_{k,i}(p^*,q^*)$ are bounded for any $i\in\{1,...,n\}$, and satisfy the ergodicity property, i.e., there exists a matrix $\bm H_i$ such that
$
\frac{1}{t}\sum^t_{k=1}\bm\varphi_{k,i}(p^*,q^*){\bm\varphi^T_{k,i}(p^*,q^*)}\xrightarrow[{t\rightarrow\infty}] {} \bm H_i
$
with $\sum^n_{i=1}\bm H_i$ being positive definite (see e.g., \cite{c7}), then $a_t$ can be taken as $a_t=t^{\rho}$, $0<\rho<1$;
(II) If there exist three positive constants $s_1, s_2$ and $s_3$ (they may depend on the sample  $\omega$) such that
\ban
\sum^n_{i=1}\sum^{t}_{k=0}(\|y_{k,i}\|^2+\|u_{k,i}\|^2)=O(t^{s_1}), ~~ {\rm a.s.}\nonumber\\
\lambda^{p,q}_{\min}(t)\geq s_2(\log t)^{1+s_3}, ~\hbox{for} ~(p,q)\in M^*,~~{\rm a.s.},
\ean
then \Cref{a5} can be also satisfied by taking $a_t=(\log t)\log\log t$.
\end{remark}

\begin{remark}
For the case where there is only one sensor $(n=1)$, Guo et al. in \cite{wc1} investigated the strong consistency of the order estimate under the following
conditions,
\begin{equation}
\begin{split}
&\frac{\log ({\sum^{t}_{k=0}\|\bm \varphi_{k,1}(p^*,q^*)\|^2}+1)}{a_t}\xrightarrow[{t\rightarrow\infty}] {} 0, ~~{\rm a.s.}\\
&\frac{ {a_t}}{\lambda_{\min}(\sum^t_{k=0}\bm \varphi_{k,1}(p,q)\bm \varphi^T_{k,1}(p,q)+\gamma\bm I)}\xrightarrow[{t\rightarrow\infty}] {} 0,  ~~\hbox{for} ~(p,q)\in M^*,~~{\rm a.s.},\label{order1}
\end{split}
\end{equation}
where $\gamma$ is a positive constant, and  $\{a_t, t\geq 1\} $  is a sequence of positive numbers.  \Cref{a5}
can be considered as an extension of (\ref{order1}) to the case of multiple sensors.
\end{remark}

\begin{remark}\label{joint} Cooperative Excitation Condition I (i.e., \cref{a5}) can reflect the joint effect of multiple sensors to some extent: all sensors may cooperatively estimate the unknown orders and parameters  under \cref{a5}(see \cref{t1} and \cref{knownparameter}), even though any individual sensor can not fulfill the estimation task since the single sensor may be lack of adequate excitation to satisfy the condition (\ref{order1}).
\end{remark}

In the following, we propose an algorithm to estimate the system orders $p_0$ and $q_0$ in a distributed way. For this propose, we introduce a local information criterion  $L_{t,i}(p,q)$ for the sensor $i$ at the time instant $t\geq 0$,
\bna\label{criterion}
L_{t,i}(p,q)=\sigma_{t,i}(p,q,\bm\theta_{t,i}(p,q))+(p+q)a_t,
\ena
where  $\sigma_{0,i}(p,q,\bm\beta(p,q))=0$, and $\sigma_{t,i}(p,q,\bm\beta(p,q))$ is recursively defined for $t>0$ as follows,
\bna\label{3}
~~~~~\sigma_{t,i}(p,q,\bm\beta(p,q))=\sum_{j\in N_i}a_{ij}\big(\sigma_{t-1,j}(p,q,\bm\beta(p,q))+[y_{t,j}-{\bm\beta}^T(p,q)\bm\varphi_{t-1,j}(p,q)]^2\big).
\ena
By $\sigma_{0,i}(p,q,\bm\beta(p,q))=0$, (\ref{3}) is equivalent to the following equation,
\bna
\sigma_{t,i}(p,q,\bm\beta(p,q))=\sum^n_{j=1}\sum^{t-1}_{k=0}a^{(t-k)}_{ij}[y_{k+1,j}-{\bm\beta}^T(p,q)\bm\varphi_{k,j}(p,q)]^2.\label{w4}
\ena

When the upper bounds of orders are known, the distributed algorithm to estimate the system orders and parameters is put forward by minimizing LIC ( i.e., $L_{t,i}(p,q)$) and using \Cref{algorithm1}. It is clear that in (\ref{criterion}), the first term is used to minimize the error between the observation signals and the prediction, while the
penalty term ``$(p+q)a_t$"  is introduced to avoid overfitting. The details of the algorithm can be found in \Cref{algorithm2}.

\begin{algorithm}[!ht]
\caption{ }\label{algorithm2}
 For any given $i\in\{1,\cdots,n\}$,  the distributed estimation  of the system orders and parameters can be obtained at the time instant $t\geq1$ as follows.

$\mathbf{Step\ 1:}$ For any $(p,q)\in M$, based on $\{\bm \varphi_{k,j}(p,q), y_{k+1,j}\}^{t-1}_{k=1}$, $ j\in N_i,$
 the estimate  $\bm\theta_{t,i}(p,q)$ can be obtained by using \Cref{algorithm1}.

$\mathbf{Step\ 2:}$ (Order Estimation) With the estimates $\{\bm\theta_{t,i}(p,q)\}_{(p,q)\in M}$ obtained by Step 1, the estimates  $(p_{t,i},q_{t,i})$ of  system orders are
given by minimizing $L_{t,i}(p,q)$, i.e.,
\bna\label{1}
(p_{t,i},q_{t,i})={\arg\min}_{(p,q)\in M}L_{t,i}(p,q).
\ena

$\mathbf{Step\ 3:}$ (Parameter Estimation)
The  estimate $\bm\theta_{t,i}(p_{t,i},q_{t,i})$ for the unknown parameter $\bm\theta(p_0,q_0)$ can be obtained by using \Cref{algorithm1}, where the orders $(p, q)$ are replaced by the estimates $(p_{t,i},q_{t,i})$ obtained in  Step 2.

Repeating the above steps, we obtain the order estimates $p_{t,i},q_{t,i}$  and parameter estimates $\bm\theta_{t,i}(p_{t,i},q_{t,i})$ for $t\geq 0$ and $i=1,2,\cdots,n$.
\end{algorithm}

In the following, we will analyze the convergence of the estimation for system orders and  parameters obtained in \Cref{algorithm2}. To this end, we first introduce some preliminary lemmas.

\begin{lemma}{ \rm\cite{wc2}}\label{wla}
In \Cref{algorithm1}, for any fixed $p, q$ and $t\geq 1$, we have
\ban
\lambda_{\max}\{\bm d_t(p,q)\bm \Phi^T_t(p,q)\bm P_t(p,q)\bm \Phi_t(p,q)\}\leq \frac{det (\bm P^{-1}_{t+1}(p,q))-det(\bm P^{-1}_{t}(p,q))}{det(\bm P^{-1}_{t+1}(p,q))}\leq1.
\ean
\end{lemma}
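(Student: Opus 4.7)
The plan is to unpack the block-diagonal structure of $\bm d_t(p,q)$, $\bm\Phi_t(p,q)$ and $\bm P_t(p,q)$ so that the matrix eigenvalue on the left collapses into a maximum over the $n$ sensors, and then to prove the scalar inequality for each sensor using the matrix determinant lemma, before finally aggregating the per-sensor bounds via the doubly stochastic weights of $\mathcal{A}$. Concretely, because $\bm\Phi_t$ is built by stacking the column vectors $\bm\varphi_{t,i}(p,q)$ in a diagonal pattern and $\bm P_t$, $\bm d_t$ are themselves (block) diagonal, the $n\times n$ matrix $\bm d_t\bm\Phi^T_t\bm P_t\bm\Phi_t$ is diagonal with entries $d_{t,i}\bm\varphi^T_{t,i}\bm P_{t,i}\bm\varphi_{t,i}$. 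Hence
\[
\lambda_{\max}\{\bm d_t\bm\Phi^T_t\bm P_t\bm\Phi_t\}=\max_{1\leq i\leq n} d_{t,i}\bm\varphi^T_{t,i}\bm P_{t,i}\bm\varphi_{t,i}.
\]

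Next I would use the definition of $d_{t,i}$ in (\ref{ws3}) to rewrite $d_{t,i}\bm\varphi^T_{t,i}\bm P_{t,i}\bm\varphi_{t,i}=1-d_{t,i}$, and then apply the matrix determinant lemma to (\ref{w8}) to get
\[
\det(\bar{\bm P}^{-1}_{t+1,i})=\det(\bm P^{-1}_{t,i})\bigl(1+\bm\varphi^T_{t,i}\bm P_{t,i}\bm\varphi_{t,i}\bigr),
\]
which turns each per-sensor quantity into the clean form $d_{t,i}\bm\varphi^T_{t,i}\bm P_{t,i}\bm\varphi_{t,i}=1-\det(\bm P^{-1}_{t,i})/\det(\bar{\bm P}^{-1}_{t+1,i})$. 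Setting $x_i=\det(\bm P^{-1}_{t,i})/\det(\bar{\bm P}^{-1}_{t+1,i})\in(0,1]$, the left-hand side of the lemma becomes $\max_i(1-x_i)$, and I would combine the individual bounds by the elementary inequality $\prod_i x_i\leq \min_i x_i$, giving $\max_i(1-x_i)\leq 1-\prod_i x_i=[\det(\bar{\bm P}^{-1}_{t+1})-\det(\bm P^{-1}_t)]/\det(\bar{\bm P}^{-1}_{t+1})$ in the block-diagonal sense.

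The step I expect to be the main obstacle is passing from $\bar{\bm P}^{-1}_{t+1}$ (adaptation) to $\bm P^{-1}_{t+1}$ (diffusion), since the diffusion step (\ref{ws4}) only asserts $\bm P^{-1}_{t+1,i}=\sum_{j\in N_i}a_{ij}\bar{\bm P}^{-1}_{t+1,j}$ and does not a priori tell us how the determinants behave. The key observation is that $\log\det(\cdot)$ is concave on positive definite matrices, so by Jensen
\[
\log\det(\bm P^{-1}_{t+1,i})\geq \sum_{j\in N_i}a_{ij}\log\det(\bar{\bm P}^{-1}_{t+1,j}).
\]
Summing over $i$ and using that $\mathcal{A}$ is doubly stochastic (so $\sum_i a_{ij}=1$), the double sum collapses to $\sum_j\log\det(\bar{\bm P}^{-1}_{t+1,j})$, yielding $\det(\bm P^{-1}_{t+1})\geq \det(\bar{\bm P}^{-1}_{t+1})$ with both determinants understood in the block-diagonal sense.

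Finally I would assemble the pieces: since $\det(\bm P^{-1}_{t+1})\geq \det(\bar{\bm P}^{-1}_{t+1})\geq \det(\bm P^{-1}_t)$, the map $u\mapsto (u-\det(\bm P^{-1}_t))/u$ is increasing in $u>0$, so the bound from Jensen upgrades the intermediate ratio with $\bar{\bm P}^{-1}_{t+1}$ to the one with $\bm P^{-1}_{t+1}$ claimed by the lemma, and the rightmost inequality $\leq 1$ follows immediately from $\det(\bm P^{-1}_t)>0$. This chains together to give the stated two-sided inequality.
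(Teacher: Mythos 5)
Your proof is correct, and the paper itself gives no proof of this lemma --- it is imported verbatim from \cite{wc2}, whose argument proceeds exactly as you reconstruct it: the block-diagonal reduction to $\max_i(1-d_{t,i})$, the rank-one determinant identity $\det(\bar{\bm P}^{-1}_{t+1,i})=\det(\bm P^{-1}_{t,i})(1+\bm\varphi^T_{t,i}\bm P_{t,i}\bm\varphi_{t,i})$, and the passage from $\bar{\bm P}^{-1}_{t+1}$ to $\bm P^{-1}_{t+1}$ via concavity of $\log\det$ together with double stochasticity of $\mathcal{A}$. Nothing is missing.
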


\begin{lemma}{\rm\cite{wc2}}\label{l3}
Under Assumptions \ref{a1} and  \ref{a2},   we have for $p\geq p_0$ and $q\geq q_0$,
\ban
\sum^n_{i=1}\widetilde{\bm\theta}^T_{t,i}(p,q){{\bm P}}^{-1}_{t,i}(p,q) \widetilde{\bm\theta}_{t,i}(p,q)=O(\log r_t(p,q)),
\ean
where $r_t(p,q)$ is defined in \Cref{a5}.
\end{lemma}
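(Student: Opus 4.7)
The plan is to study the stochastic Lyapunov function
$$V_t \triangleq \sum_{i=1}^n \widetilde{\bm\theta}^T_{t,i}(p,q)\,\bm P^{-1}_{t,i}(p,q)\,\widetilde{\bm\theta}_{t,i}(p,q),$$
derive a recursion of the form $V_{t+1}\le V_t + M_{t+1} + R_{t+1}$, where $\{M_{t+1}\}$ is a martingale difference sequence with respect to $\{\mathscr{F}_{t+1}\}$ and $R_{t+1}\ge 0$ is a ``quadratic noise'' term whose partial sums are $O(\log r_t(p,q))$, and then sum and apply a martingale convergence theorem to conclude $V_t=O(\log r_t(p,q))$ almost surely.

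First I would apply \Cref{wl2} to write $\widetilde{\bm\theta}_{t+1,i}=\bm P_{t+1,i}(\bm\xi_{t,i}-\bm\eta_{t,i})$, where $\bm\xi_{t,i}\triangleq\sum_{j\in N_i}a_{ij}\bm P^{-1}_{t,j}\widetilde{\bm\theta}_{t,j}$ and $\bm\eta_{t,i}\triangleq\sum_{j\in N_i}a_{ij}\bm\varphi_{t,j}w_{t+1,j}$, and expand
$$\widetilde{\bm\theta}^T_{t+1,i}\bm P^{-1}_{t+1,i}\widetilde{\bm\theta}_{t+1,i}=\bm\xi_{t,i}^T\bm P_{t+1,i}\bm\xi_{t,i}-2\bm\xi_{t,i}^T\bm P_{t+1,i}\bm\eta_{t,i}+\bm\eta_{t,i}^T\bm P_{t+1,i}\bm\eta_{t,i}.$$
The crux for the $\bm\xi$-quadratic is a matrix Jensen-type inequality: for weights $a_j\ge 0$ with $\sum_j a_j=1$ and positive definite $A_j$,
$$\Bigl(\sum_j a_j A_j x_j\Bigr)^{\!T}\Bigl(\sum_j a_j A_j\Bigr)^{\!-1}\Bigl(\sum_j a_j A_j x_j\Bigr)\le\sum_j a_j x_j^T A_j x_j,$$
which is simply the fact that the generalized weighted mean minimizes weighted quadratic deviation. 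Inserting the identity $\bar{\bm P}^{-1}_{t+1,j}=\bm P^{-1}_{t,j}+\bm\varphi_{t,j}\bm\varphi^T_{t,j}$ into the diffusion step gives $\bm P^{-1}_{t+1,i}\succeq\sum_{j\in N_i}a_{ij}\bm P^{-1}_{t,j}$, so $\bm\xi_{t,i}^T\bm P_{t+1,i}\bm\xi_{t,i}\le\sum_{j\in N_i}a_{ij}\widetilde{\bm\theta}^T_{t,j}\bm P^{-1}_{t,j}\widetilde{\bm\theta}_{t,j}$. Summing over $i$ and using that $\mathcal{A}$ is doubly stochastic under \Cref{a1} collapses this back to $V_t$.

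The cross term $-2\sum_i\bm\xi_{t,i}^T\bm P_{t+1,i}\bm\eta_{t,i}$ is an $\mathscr{F}_{t+1}$-martingale difference because $\bm\xi_{t,i}$, $\bm P_{t+1,i}$, and $\bm\varphi_{t,j}$ are all $\mathscr{F}_t$-measurable while $E[w_{t+1,j}|\mathscr{F}_t]=0$. For the quadratic-noise term I would use $a_{ij}\bm P_{t+1,i}\preceq\bar{\bm P}_{t+1,j}$, obtained by retaining only the $k=j$ contribution in $\bm P^{-1}_{t+1,i}=\sum_k a_{ik}\bar{\bm P}^{-1}_{t+1,k}$, together with the uniform control $|E[w_{t+1,j}w_{t+1,k}|\mathscr{F}_t]|\le C$ from \Cref{a2} (via conditional Cauchy--Schwarz and the $\beta$th moment bound), to reduce its conditional expectation to $O\bigl(\sum_j\bm\varphi^T_{t,j}\bar{\bm P}_{t+1,j}\bm\varphi_{t,j}\bigr)$. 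By the matrix determinant lemma, each summand equals $1-d_{t,j}\le\log\bigl(\det\bar{\bm P}^{-1}_{t+1,j}/\det\bm P^{-1}_{t,j}\bigr)$, which telescopes to $O(\log r_t(p,q))$ over $t$; this is essentially the content of \Cref{wla}.

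Finally, summing the recursion and absorbing the accumulated cross-term martingale through a standard convergence theorem for locally square-integrable martingales (in the form $S_t=o(\langle S\rangle_t)+O(1)$ a.s., where $\langle S\rangle_t$ is itself bounded by $\sum_{k\le t}R_k=O(\log r_t(p,q))$) delivers the claim. I expect the main obstacle to be the bookkeeping around the diffusion step: the matrix Jensen inequality and the $a_{ij}\bm P_{t+1,i}\preceq\bar{\bm P}_{t+1,j}$ transition must be applied in just the right order so that doubly-stochastic cancellations leave no residual multiplicative factor that would inflate the final rate beyond $O(\log r_t(p,q))$.
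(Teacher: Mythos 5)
Your plan reproduces, in outline, the argument the paper relies on for this lemma (it is quoted from \cite{wc2}, and the same template is carried out explicitly in the paper's own proof of \cref{l5}): form the Lyapunov function $V_t$, obtain a one-step recursion from \cref{wl2}, treat the quadratic-in-noise term via \cref{wla} together with the conditional moment bound of \cref{a2}, and absorb the cross term by a martingale estimation theorem. The matrix Jensen inequality, the monotonicity $\bm P^{-1}_{t+1,i}\succeq\sum_{j}a_{ij}\bm P^{-1}_{t,j}$, the doubly stochastic collapse back to $V_t$, the transition $a_{ij}\bm P_{t+1,i}\preceq\bar{\bm P}_{t+1,j}$, and the identification of $\bm\varphi^T_{t,j}\bar{\bm P}_{t+1,j}\bm\varphi_{t,j}$ with $1-d_{t,j}$ followed by the determinant telescoping are all correct and are exactly the ingredients used there.

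The one genuine gap is in your final absorption step. Having discarded the slack in $\bm P_{t+1,i}\preceq\bigl(\sum_j a_{ij}\bm P^{-1}_{t,j}\bigr)^{-1}$ to get $V_{t+1}\le V_t+M_{t+1}+R_{t+1}$, you assert that the cross-term martingale $S_t=\sum_{k\le t}M_{k+1}$ satisfies $S_t=O(1)+o(\langle S\rangle_t)$ with $\langle S\rangle_t\le\sum_{k\le t}R_k=O(\log r_t(p,q))$. That last containment does not hold: since $M_{k+1}=-2\sum_j\bigl(\sum_i a_{ij}\bm\xi^T_{k,i}\bm P_{k+1,i}\bm\varphi_{k,j}\bigr)w_{k+1,j}$, conditional Cauchy--Schwarz gives
\ban
E[M^2_{k+1}\mid\mathscr{F}_k]=O\Bigl(\sum_{i,j} a_{ij}\bigl(\bm\xi^T_{k,i}\bm P_{k+1,i}\bm\xi_{k,i}\bigr)\bigl(\bm\varphi^T_{k,j}\bar{\bm P}_{k+1,j}\bm\varphi_{k,j}\bigr)\Bigr)
=O\Bigl(V_k\sum_j(1-d_{k,j})\Bigr),
\ean
so $\langle S\rangle_t$ carries an extra factor of $V_k$ that your bound omits, and controlling it by $O(\log r_t(p,q))$ presupposes the conclusion. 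The standard repair---the one used in \cite{wc2} and visible in the paper's proof of \cref{l5}, see (\ref{wnew5})--(\ref{wnew6})---is to \emph{retain} the slack you threw away: it produces a nonnegative term analogous to $\sum_{k}\|\bm S^T_k\bar{\bm P}_{k+1}\bm \Phi_k\|^2$ on the left-hand side of the summed recursion, which dominates $\langle S\rangle_t$ up to a bounded factor, so that $S_t=O(1)+o\bigl(\sum_k\|\cdot\|^2\bigr)$ is absorbed and only $\sum_k R_k=O(\log r_t(p,q))$ survives on the right. With that modification your argument closes and coincides with the reference's proof.
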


How to deal with the effect of the noises is a crucial step for the convergence analysis of \cref{algorithm2}, and the following lemma provides an upper bound for
the cumulative summation of the noises.
\begin{lemma}\label{l5}
Under Assumptions \ref{a1} and \ref{a2},  we have for any fixed $p, q$,
\ban
\bm S^T_{t+1,i}(p,q)\bm P_{t+1,i}(p,q)\bm S_{t+1,i}(p,q)= O(\log r_{t}(p,q)),
\ean
where  $\bm S_{t+1,i}(p,q)=\sum^n_{j=1}\sum^{t}_{k=0}a^{(t+1-k)}_{ij}\bm\varphi_{k,j}(p,q)w_{k+1,j}$,  and $a^{(t+1-k)}_{ij}$ is the $i$-th row, $j$-th column entry of the weight matrix $\mathcal{A}^{t+1-k}$.
\end{lemma}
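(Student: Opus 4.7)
My plan is to prove \Cref{l5} by reducing it to \Cref{l3} through a fictitious parallel run of \Cref{algorithm1} in which the statistic $\bm S_{t+1,i}(p,q)$ plays exactly the role that $\bm P^{-1}_{t+1,i}(p,q)\widetilde{\bm\theta}_{t+1,i}(p,q)$ plays in \Cref{wl2}. The first step is to convert the double sum defining $\bm S_{t+1,i}(p,q)$ into a one-step recursion: isolating the $k=t$ summand and applying the semigroup identity $a^{(t+1-k)}_{ij}=\sum_l a_{il}a^{(t-k)}_{lj}$ to the remaining terms gives
\ban
\bm S_{t+1,i}(p,q)=\sum_{l\in N_i}a_{il}\bm S_{t,l}(p,q)+\sum_{j\in N_i}a_{ij}\bm\varphi_{t,j}(p,q)w_{t+1,j},\qquad \bm S_{0,i}(p,q)=\bm 0,
\ean
which is, up to an overall sign, the same recursion established for $\bm P^{-1}_{t,j}(p,q)\widetilde{\bm\theta}_{t,j}(p,q)$ in \Cref{wl2}.

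Next, I introduce an auxiliary distributed LS problem sharing the same graph, the same regressors $\bm\varphi_{k,j}(p,q)$ and the same initial matrices $\bm P_{0,j}(p,q)$, but with artificial true parameter $\bm\theta'=\bm 0$, synthetic observations $\tilde y_{k+1,j}:=w_{k+1,j}$, and zero initial estimates $\bm\theta'_{0,j}=\bm 0$. Because the covariance updates (\ref{ws2})--(\ref{ws4}) are output-independent, the sequence $\bm P_{t,j}(p,q)$ generated by this auxiliary run is identical to that of the original algorithm. The regression identity $\tilde y_{k+1,j}=\bm\varphi^T_{k,j}(p,q)\cdot\bm 0+w_{k+1,j}$ is tautological for every $(p,q)$, so the purely algebraic derivation used in the proof of \Cref{wl2} applies with no restriction on $(p,q)$ and yields
\ban
\bm P^{-1}_{t+1,i}(p,q)\widetilde{\bm\theta}'_{t+1,i}=\sum_{j\in N_i}a_{ij}\bm P^{-1}_{t,j}(p,q)\widetilde{\bm\theta}'_{t,j}-\sum_{j\in N_i}a_{ij}\bm\varphi_{t,j}(p,q)w_{t+1,j}.
\ean
An immediate induction from $\widetilde{\bm\theta}'_{0,i}=\bm 0$, combined with the $\bm S$-recursion above, produces the identification $\bm P^{-1}_{t+1,i}(p,q)\widetilde{\bm\theta}'_{t+1,i}=-\bm S_{t+1,i}(p,q)$, whence
\ban
\bm S^T_{t+1,i}(p,q)\bm P_{t+1,i}(p,q)\bm S_{t+1,i}(p,q)=\bigl(\widetilde{\bm\theta}'_{t+1,i}\bigr)^{T}\bm P^{-1}_{t+1,i}(p,q)\widetilde{\bm\theta}'_{t+1,i}.
\ean

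I would then apply \Cref{l3} to the auxiliary system: \Cref{a1} and \Cref{a2} are inherited verbatim because the graph and the noise process are unchanged, and since the auxiliary true parameter is $\bm 0$ the order hypothesis of \Cref{l3} is trivially satisfied for every $(p,q)$. Consequently $\sum_{i=1}^n\bigl(\widetilde{\bm\theta}'_{t+1,i}\bigr)^{T}\bm P^{-1}_{t+1,i}(p,q)\widetilde{\bm\theta}'_{t+1,i}=O(\log r_t(p,q))$, and since every summand is nonnegative, the per-sensor bound claimed in \Cref{l5} follows at once.

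The main obstacle is verifying that the proof of \Cref{l3} borrowed from \cite{wc2} does not silently exploit anything about the original system beyond the structural identity $\tilde y=\bm\varphi^T\bm\theta'+w$; a careful reading should confirm that it only uses the martingale-difference property of $\{w_{t,i}\}$ (\Cref{a2}) together with determinantal/Lyapunov estimates for $\bm P_{t,i}(p,q)$, both of which transfer unchanged to the auxiliary system. If this inheritance turned out to be awkward to justify cleanly, a self-contained fallback is a direct stochastic-Lyapunov analysis of $V_t:=\sum_{i=1}^n\bm S^T_{t,i}(p,q)\bm P_{t,i}(p,q)\bm S_{t,i}(p,q)$: expand $V_{t+1}$ through the $\bm S$-recursion, exploit the $\mathscr{F}_t$-measurability of $\bm P_{t+1,i}$ to annihilate the cross term in conditional expectation, control the quadratic drift via matrix convexity of $\bm X\mapsto\bm X^{-1}$ combined with \Cref{wla} to telescope into $\log\det\bm P^{-1}_{t,i}$, and close with a Chen--Guo-type martingale convergence theorem to recover the same $O(\log r_t(p,q))$ bound.
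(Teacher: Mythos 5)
Your proposal is correct, but it reaches the bound by a genuinely different route than the paper. The paper's own proof is precisely the ``fallback'' you sketch in your last paragraph: it stacks $\bm S_t=col\{\bm S_{t,1},\dots,\bm S_{t,n}\}$, derives the recursion $\bm S_{k+1}=\mathscr{A}(\bm S_k+\bm\Phi_k\bm W_{k+1})$, and then runs a direct stochastic-Lyapunov computation on $\bm S_{k+1}^T\bm P_{k+1}\bm S_{k+1}$ using $\mathscr{A}^T\bm P_{k+1}\mathscr{A}\leq\bar{\bm P}_{k+1}$, the identities $\bm d_k\bm\Phi_k^T\bm P_k\bm\Phi_k=\bm I-\bm d_k$ and $\bar{\bm P}_{k+1}\bm\Phi_k=\bm P_k\bm\Phi_k\bm d_k$, a telescoping sum, the martingale estimation theorem for the cross term, and the $O(\log r_t)$ bound for $\sum_k\bm W_{k+1}^T\bm d_k\bm\Phi_k^T\bm P_k\bm\Phi_k\bm W_{k+1}$ borrowed from the proof of Lemma 4.4 in the cited reference. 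Your primary route instead observes that $\bm S_{t,i}$ satisfies, up to sign, the same recursion as $\bm P^{-1}_{t,i}\widetilde{\bm\theta}_{t,i}$ in \Cref{wl2}, identifies $\bm S_{t+1,i}=-\bm P^{-1}_{t+1,i}\widetilde{\bm\theta}'_{t+1,i}$ for an auxiliary run of \Cref{algorithm1} with zero true parameter and observations equal to the noise, and then invokes \Cref{l3}; this is more economical and makes transparent why the two lemmas have the same $O(\log r_t(p,q))$ right-hand side, since $\bm S^T\bm P\bm S$ literally \emph{is} the Lyapunov function of \Cref{l3} for the auxiliary system. The one point on which your argument genuinely leans, and which you correctly flag, is that \Cref{l3} (imported from the reference) must be provable purely from the error recursion of \Cref{wl2}, the $\mathscr{F}_t$-measurability of the regressors, the martingale-difference property in \Cref{a2}, and the determinantal estimate of \Cref{wla}, without any ARX-specific structure; this is indeed the case (the paper itself reuses that proof in exactly this generic fashion in \Cref{wl6} and in obtaining (\ref{new7})), but a fully self-contained write-up would either verify this inheritance explicitly or fall back to the direct computation, which is what the authors chose to do.
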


\begin{proof}
For the convenience of expression, we use $\bm S_{t,i}, \bm P_{k}, \bm \Phi_k$ and $\bm d_k$ to denote $\bm S_{t,i}(p,q)$, $ \bm P_{k}(p,q )$, $\bm \Phi_k(p,q)$ and $\bm d_k(p,q)$.

Set $\bm S_0=0$ and $\bm S_t=col\{\bm S_{t,1},...,\bm S_{t,n}\}.$
Then  we have
\ban
\bm S_{k+1}=\sum^{k}_{l=0}\mathscr{A}^{k+1-l}\bm \Phi_l\bm W_{l+1}=\mathscr{A}(\bm S_k+\bm \Phi_k\bm W_{k+1}).
\ean
By (\ref{ws2}) and Lemma 4.2 in \cite{wc2}, we have
\bna & &
\bm S^T_{k+1} \bm P_{k+1} \bm S_{k+1}=(\bm S^T_k+\bm W^T_{k+1}\bm \Phi^T_k) \mathscr{A}^T\bm P_{k+1}\mathscr{A}(\bm S_k+\bm \Phi_k\bm W_{k+1})\nonumber\\
&\leq & (\bm S^T_k+\bm W^T_{k+1}\bm \Phi^T_k)\bar{\bm P}_{k+1}(\bm S_k+\bm \Phi_k\bm W_{k+1})\nonumber\\
&=&  (\bm S^T_k+\bm W^T_{k+1}\bm \Phi^T_k)({\bm P}_{k}-\bm P_k\bm \Phi_k\bm d_k\bm \Phi^T_k \bm P_k)(\bm S_k+\bm \Phi_k\bm W_{k+1})\nonumber\\
&=& \bm S^T_{k} \bm P_{k} \bm S_{k}+2\bm W^T_{k+1}\bm \Phi^T_k\bm P_k\bm S_k+\bm W^T_{k+1}\bm \Phi^T_k {\bm P}_{k}\bm \Phi_k\bm W_{k+1}-\bm S^T_k\bm P_k\bm \Phi_k\bm d_k\bm \Phi^T_k \bm P_k\bm S_k\nonumber\\
&&-2\bm S^T_k\bm P_k\bm \Phi_k\bm d_k\bm \Phi^T_k \bm P_k\bm \Phi_k \bm W_{k+1}-\bm W^T_{k+1}\bm \Phi^T_k\bm P_k\bm \Phi_k\bm d_k\bm \Phi^T_k \bm P_k\bm \Phi_k\bm W_{k+1}.\label{wnew1}
\ena
Moreover, by the definition of $\bm d_k$ and (\ref{wnew2}), we have
\bna
\bm d_k\bm \Phi^T_k \bm P_k\bm \Phi_k=\bm I-\bm d_k.\label{wnew3}
\ena
By (\ref{ws2}) and (\ref{wnew3}), we derive that
\bna
\bar{\bm P}_{k+1}\bm \Phi_k&=&{\bm P}_{k}\bm \Phi_k-\bm P_k\bm \Phi_k\bm d_k\bm \Phi^T_k \bm P_k\bm \Phi_k\nonumber\\
&=&{\bm P}_{k}\bm \Phi_k-\bm P_k\bm \Phi_k(\bm I-\bm d_k)={\bm P}_{k}\bm \Phi_k\bm d_k.\label{wnew4}
\ena
Substituting (\ref{wnew3}) into (\ref{wnew1}),  we have by (\ref{wnew4})
\ban
\bm S^T_{k+1} \bm P_{k+1} \bm S_{k+1}&\leq&\bm S^T_{k} \bm P_{k} \bm S_{k}+2\bm S^T_k\bm P_k\bm \Phi_k\bm d_k\bm W_{k+1}
-\bm S^T_k\bm P_k\bm \Phi_k\bm d_k\bm \Phi^T_k \bm P_k\bm S_k\\
&&+\bm W^T_{k+1}\bm \Phi^T_k\bm P_k\bm \Phi_k\bm d_k\bm W_{k+1}\\
&=&\bm S^T_{k} \bm P_{k} \bm S_{k}+2\bm S^T_k\bar{\bm P}_{k+1}\bm \Phi_k\bm W_{k+1}
-\bm S^T_k\bar{\bm P}_{k+1}\bm \Phi_k\bm d^{-1}_k\bm \Phi^T_k \bar{\bm P}_{k+1}\bm S_k\\
&&+\bm W^T_{k+1}\bm \Phi^T_k\bm P_k\bm \Phi_k\bm d_k\bm W_{k+1}\\
&\leq&\bm S^T_{k} \bm P_{k} \bm S_{k}+2\bm S^T_k\bar{\bm P}_{k+1}\bm \Phi_k\bm W_{k+1}
-\bm S^T_k\bar{\bm P}_{k+1}\bm \Phi_k\bm \Phi^T_k \bar{\bm P}_{k+1}\bm S_k\\
&&+\bm W^T_{k+1}\bm \Phi^T_k\bm P_k\bm \Phi_k\bm d_k\bm W_{k+1}.
\ean
By the summation of both sides of the above inequality, we have
\bna
&&\bm S_{t+1}\bm P_{t+1}\bm S_{t+1}+\sum^{t}_{k=0}\|\bm S^T_k\bar{\bm P}_{k+1}\bm \Phi_k\|^2\nonumber\\
&\leq&2\sum^{t}_{k=0}\bm S^T_k\bar{\bm P}_{k+1}\bm \Phi_k\bm W_{k+1}+\sum^{t}_{k=0}\bm W^T_{k+1}\bm \Phi^T_k\bm P_k\bm \Phi_k\bm d_k\bm W_{k+1}.\label{wnew5}
\ena
Next, we estimate the two terms on the right hand side of (\ref{wnew5}) separately. By \Cref{a2} and the martingale estimation theorem (see,e.g.,\cite{wc4}), we can get the following inequality,
\bna
\sum^{t}_{k=0}\bm S^T_k\bar{\bm P}_{k+1}\bm \Phi_k\bm W_{k+1}=O(1)+o\left(\sum^{t}_{k=0}\|\bm S^T_k\bar{\bm P}_{k+1}\bm \Phi_k\|^2\right).\label{wnew6}
\ena
Then  by the proof of Lemma 4.4 in \cite{wc2}, we obtain
\bna
\sum^{t}_{k=0}\bm W^T_{k+1}\bm \Phi^T_k\bm P_k\bm \Phi_k\bm d_k\bm W_{k+1}=\sum^{t}_{k=0}\bm W^T_{k+1}\bm d_k\bm \Phi^T_k\bm P_k\bm \Phi_k\bm W_{k+1}=O(\log r_{t}).\label{new7}
\ena
Substituting (\ref{wnew6}) and (\ref{new7}) into  (\ref{wnew5}) yields
\ban
\bm S_{t+1}\bm P_{t+1}\bm S_{t+1}=O(\log r_{t}),
\ean
which completes the proof of the lemma.
\end{proof}

\begin{remark}\label{rrr1}
If \Cref{a2} is relaxed to the following weaker noise condition
\bna
\sup_{t\geq 0}E[|w_{t+1,i}|^{2}|\mathscr{F}_t]<\infty, {\rm a.s.}, \label{noise}
\ena
then under  \cref{a1} similar results as those of \Cref{l3} and \Cref{l5} can also be obtained, save that the term $``\log r_t(p,q)$" in  \Cref{l3} and \Cref{l5} is replaced by $``\log r_t(p,q)$
$(\log\log r_t(p,q))^{\tau}\ \ \hbox{(for ~some} ~\tau>1)$".
\end{remark}

Now, we present the main results concerning the convergence of the order estimates obtained by \cref{algorithm2}.
\begin{theorem}\label{t1} Under  Assumptions \ref{a1}-\ref{a5}, the order estimate sequence $(p_{t,i},q_{t,i})$ given by (\ref{1}) converges to the true order $(p_0, q_0)$ almost surely, i.e.,
\ban
(p_{t,i},q_{t,i})\xrightarrow[{t\rightarrow\infty}] {} (p_0,q_0), ~~~{\rm a.s.} ~~\hbox{for}~~ i\in\{1,\cdots,n\}.
\ean
\end{theorem}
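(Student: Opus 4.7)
The plan is to prove that $L_{t,i}(p,q) > L_{t,i}(p_0,q_0)$ eventually almost surely for every $(p,q) \in M \setminus \{(p_0,q_0)\}$; since $M$ is finite, this yields $(p_{t,i},q_{t,i}) = (p_0,q_0)$ eventually, hence the claim. I split the analysis into the overestimation case ($p \geq p_0$ and $q \geq q_0$, with $(p,q)\neq(p_0,q_0)$) and the underestimation case ($p < p_0$ or $q < q_0$).

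For overestimation, (\ref{dynamics}) gives $y_{k+1,j} = \bm\theta^T(p,q)\bm\varphi_{k,j}(p,q) + w_{k+1,j}$, so substituting into (\ref{w4}) and expanding yields $\sigma_{t,i}(p,q,\bm\theta_{t,i}(p,q)) = A_{t,i} + 2\widetilde{\bm\theta}_{t,i}^T(p,q)\bm S_{t,i}(p,q) + \widetilde{\bm\theta}_{t,i}^T(p,q)\widetilde{\bm R}_{t,i}(p,q)\widetilde{\bm\theta}_{t,i}(p,q)$, where $A_{t,i} := \sum_{j,k}a_{ij}^{(t-k)}w_{k+1,j}^2$ is independent of $(p,q)$ and $\widetilde{\bm R}_{t,i}(p,q) := \sum_{j,k}a_{ij}^{(t-k)}\bm\varphi_{k,j}\bm\varphi_{k,j}^T$. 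Iterating (\ref{ws4}) and (\ref{w8}) gives the identity $\bm P_{t,i}^{-1}(p,q) = \widetilde{\bm R}_{t,i}(p,q) + \sum_j a_{ij}^{(t)}\bm P_{0,j}^{-1}(p,q)$, so $\widetilde{\bm R}_{t,i}\preceq \bm P_{t,i}^{-1}$; combined with \Cref{l3}, the quadratic term is $O(\log r_t(p,q))$. The cross term is handled by Cauchy--Schwarz in the $\bm P_{t,i}^{-1}$-norm using \Cref{l3} and \Cref{l5}, also giving $O(\log r_t(p,q))$. Since $A_{t,i}$ cancels in $L_{t,i}(p,q)-L_{t,i}(p_0,q_0)$, the difference equals $O(\log r_t(p^*,q^*)) + (p+q-p_0-q_0)a_t$, which is positive eventually by \Cref{a5}.

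For underestimation, assume without loss of generality $p < p_0$ (the case $q < q_0$ is symmetric using $(p^*,q_0)\in M^*$ in place of $(p_0,q^*)$). Embed $\bm\varphi_{k,j}(p,q)$ as a sub-block of $\bm\varphi_{k,j}(p_0,q^*)$ and pad $\bm\theta_{t,i}(p,q)$ with zeros to $\bm\theta_{t,i}^\# \in\mathbb{R}^{p_0+q^*}$; set $\bm\delta_{t,i} := \bm\theta(p_0,q^*) - \bm\theta_{t,i}^\#$. The coordinates of $\bm\delta_{t,i}$ on the forced-zero index set $I_0$ are fixed at the true parameters $b_{p+1},\ldots,b_{p_0},c_{q+1},\ldots,c_{q^*}$, so $\|(\bm\delta_{t,i})_{I_0}\|^2 \geq b_{p_0}^2 > 0$. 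The residual equals $A_{t,i} + 2\bm\delta_{t,i}^T\bm S_{t,i}(p_0,q^*) + \bm\delta_{t,i}^T\widetilde{\bm R}_{t,i}(p_0,q^*)\bm\delta_{t,i}$. Viewing the right-hand side as a quadratic in the unconstrained coordinates of $\bm\delta_{t,i}$ and minimizing, the Schur-complement inequality $R/R_{11}\succeq \lambda_{\min}(\widetilde{\bm R}_{t,i}(p_0,q^*))I$ delivers a lower bound of order $b_{p_0}^2\, \lambda_{\min}(\widetilde{\bm R}_{t,i}(p_0,q^*))$, modulo $\bm S$-dependent terms of order $O(\log r_t(p^*,q^*))$ (via \Cref{l5} and Cauchy--Schwarz). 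By \Cref{remark31}, there exists $\alpha>0$ with $a_{ij}^{(l)} \geq \alpha$ for $l \geq D_\mathcal{G}$, which together with the iterated identity for $\bm P_{t,i}^{-1}$ gives $\lambda_{\min}(\widetilde{\bm R}_{t,i}(p_0,q^*)) \geq \alpha\,\lambda_{\min}^{p_0,q^*}(t) - O(1)$. Combining with the overestimation bound applied at $(p_0,q_0)$, I obtain $L_{t,i}(p,q)-L_{t,i}(p_0,q_0) \geq c\,\lambda_{\min}^{p_0,q^*}(t) - O(\log r_t(p^*,q^*)) + (p+q-p_0-q_0)a_t \to +\infty$ almost surely by \Cref{a5}.

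The main obstacle is the underestimation step. It requires, first, a constrained-minimization / Schur-complement argument to turn the quadratic form evaluated at the actual (non-optimal) error $\bm\delta_{t,i}$ into a bona fide lower bound that isolates the nonvanishing bias block $(\bm\delta_{t,i})_{I_0}$, whose magnitude is secured only by the standing hypothesis $b_{p_0}\neq 0$ (respectively $c_{q_0}\neq 0$); and second, transferring the sensor-local excitation $\lambda_{\min}(\widetilde{\bm R}_{t,i}(p_0,q^*))$ to the network-wide cooperative quantity $\lambda_{\min}^{p_0,q^*}(t)$ appearing in \Cref{a5}. It is at this second step that \Cref{a1} (connectedness) and the uniform positivity of the iterated weights $a_{ij}^{(l)}$ for $l\geq D_\mathcal{G}$ from \Cref{remark31} become indispensable, since they alone let a single sensor inherit the joint excitation generated by the entire network.
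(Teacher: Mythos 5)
Your proposal is correct in substance and follows the same overall architecture as the paper's proof: the dichotomy between overestimation ($p\geq p_0$, $q\geq q_0$, $(p,q)\neq(p_0,q_0)$) and underestimation ($p<p_0$ or $q<q_0$), the expansion of the residual $\sigma_{t,i}$ into a quadratic term, a cross term, and the order-independent noise sum, the identity $\bm P^{-1}_{t,i}=\widetilde{\bm R}_{t,i}+\sum_j a^{(t)}_{ij}\bm P^{-1}_{0,j}$, and the use of \Cref{l3}, \Cref{l5}, \Cref{remark31} and \Cref{a5} to make the penalty $(p+q-p_0-q_0)a_t$ dominate in the overestimation case and the excitation $\lambda^{p_0,q^*}_{\min}(t)$ dominate in the underestimation case. (The paper frames the conclusion via limit points of $(p_{t,i},q_{t,i})$ rather than your "eventually beaten for each of the finitely many $(p,q)$," but these are equivalent.) The one genuine divergence is your treatment of the underestimation lower bound: you minimize the quadratic over the unconstrained coordinates and invoke the Schur-complement inequality $R/R_{11}\succeq\lambda_{\min}(R)I$ to isolate the forced-zero block, whereas the paper simply pads to the order $(s_i,v_i)=(\max\{p_0,p'\},\max\{q_0,q'\})$, uses the full-vector bound $\|\widetilde{\bm\theta}_{t_k,i}(s_i,v_i)\|^2\geq\alpha_0$ together with $\lambda_{\min}(\bm P^{-1}_{t_k,i})\geq a_{\min}\lambda^{s_i,v_i}_{\min}(t_k)$, and absorbs the cross term as $o\big(\widetilde{\bm\theta}^T\bm P^{-1}\widetilde{\bm\theta}\big)$ (it also needs $\|\widetilde{\bm\theta}_{t_k,i}(s_i,v_i)\|=O(1)$, obtained from (\ref{order2}), to control the initial-condition term $M_1$). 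Your route buys independence from the actual values of the estimated coordinates, but be aware of one imprecision: the raw cross term $2\bm\delta_{t,i}^T\bm S_{t,i}$ is \emph{not} $O(\log r_t(p^*,q^*))$ in general --- by Cauchy--Schwarz it is only $O\big(\|\bm P^{-1/2}\bm\delta_{t,i}\|\log^{1/2}r_t\big)$, which can grow like $\lambda_{\min}^{1/2}\log^{1/2}r_t$; the $O(\log r_t)$ claim becomes literally true only \emph{after} the partial minimization, where the $\bm S$-dependence survives solely through a term of the form $\bm S_1^T R_{11}^{-1}\bm S_1$ controlled by \Cref{l5} at order $(p,q)$ (using (\ref{mineig}) and \Cref{a5} to guarantee $R_{11}$ is eventually invertible and comparable to $\bm P^{-1}_{t,i}(p,q)$). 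You flagged this as the main obstacle but should carry out that completion explicitly, or else fall back on the paper's simpler $o(\cdot)$-absorption of the un-minimized cross term.
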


\begin{proof}
 For  $i\in\{1,\cdots,n\}$, we need to show that the sequence $(p_{t,i},q_{t,i})$ has only one limit point $(p_0,q_0)$.  Let $(p'_i,q'_i)\in M$
be a limit point  of $(p_{t,i},q_{t,i})$, i.e., there is a subsequence  $\{t_k\}$ such that
\bna
(p_{t_k,i},q_{t_k,i})\xrightarrow[{k\rightarrow\infty}] {} (p'_i,q'_i).\label{2}
\ena
In order to prove $(p_{t,i},q_{t,i})\xrightarrow[{t\rightarrow\infty}] {} (p_0,q_0)$, we just need to show the impossibility of the following two situations,

(i)~~$p'_i\geq p_0$, $q'_i\geq q_0$ and $p'_i+q'_i> p_0+q_0$,

(ii)~~$p'_i<p_0 $ or $q'_i<q_0 $.

Note that both $p_{t_k,i}$ and $q_{t_k,i}$ are integers, by (\ref{2}) we have $(p_{t_k,i},q_{t_k,i})\equiv(p'_i,q'_i)$ for sufficiently large $k$.
We first show that the situation (i) will not happen by reduction to absurdity.

Suppose that (i) holds.  By (\ref{dynamics}) and (\ref{w4}), we see that $\sigma_{t_k,i}(p'_i,q'_i,\bm\theta_{t_k,i}(p'_i,q'_i))$ can be calculated by the following equation,
\bna
&&\sigma_{t_k,i}(p'_i,q'_i,\bm\theta_{t_k,i}(p'_i,q'_i))\nonumber\\
&=&\sum^n_{j=1}\sum^{t_k-1}_{l=0}a^{(t_k-l)}_{ij}[y_{l+1,j}
-{\bm\theta^T_{t_k,i}}(p'_i,q'_i)\bm\varphi_{l,j}(p'_i,q'_i)]^2\nonumber\\
&=&\sum^n_{j=1}\sum^{t_k-1}_{l=0}a^{(t_k-l)}_{ij}[ \widetilde{\bm\theta}^T_{t_k,i}(p'_i,q'_i)\bm\varphi_{l,j}(p'_i,q'_i)+w_{l+1,j}]^2\nonumber\\
&=&\widetilde{\bm\theta}^T_{t_k,i}(p'_i,q'_i)
\left(\sum^n_{j=1}\sum^{t_k-1}_{l=0}a^{(t_k-l)}_{ij}\bm\varphi_{l,j}(p'_i,q'_i)\bm\varphi^T_{l,j}(p'_i,q'_i)\right) \widetilde{\bm\theta}_{t_k,i}(p'_i,q'_i)\nonumber\\
&&+2\widetilde{\bm\theta}^T_{t_k,i}(p'_i,q'_i)\left(\sum^n_{j=1}\sum^{t_k-1}_{l=0}a^{(t_k-l)}_{ij}\bm\varphi_{l,j}(p'_i,q'_i)w_{l+1,j}\right)+\sum^n_{j=1}\sum^{t_k-1}_{l=0}a^{(t_k-l)}_{ij}w^2_{l+1,j}.\label{7}
\ena
By \Cref{l3} and \Cref{l5}, we have the following relationship,
\bna
&&\left|\widetilde{\bm\theta}^T_{t_k,i}(p'_i,q'_i)\left(\sum^n_{j=1}\sum^{t_k-1}_{l=0}a^{(t_k-l)}_{ij}\bm\varphi_{l,j}(p'_i,q'_i)w_{l+1,j}\right)\right|\nonumber\\
&\leq& \Bigg\|\widetilde{\bm\theta}^T_{t_k,i}(p'_i,q'_i)\bm P^{-\frac{1}{2}}_{t_k,i}(p'_i,q'_i)\Bigg\|\Bigg\|\bm P^{\frac{1}{2}}_{t_k,i}(p'_i,q'_i)
\left(\sum^n_{j=1}\sum^{t_k-1}_{l=0}a^{(t_k-l)}_{ij}\bm\varphi_{l,j}(p'_i,q'_i)w_{l+1,j}\right)\Bigg\|\nonumber\\
&&=O(\log(r_{t_k}(p'_i,q'_i))=O(\log(r_{t_k}(p^*,q^*)).\label{www10}
\ena
By (\ref{ws4}) and (\ref{w8}),  we have for any $p$ and $q$
\bna
{{\bm P}}^{-1}_{t_k,i}(p,q)=\sum^n_{j=1}\sum^{t_k-1}_{l=0}a^{(t_k-l)}_{ij}\bm\varphi_{l,j}(p,q)\bm\varphi^T_{l,j}(p,q)+\sum^n_{j=1}a^{(t_k)}_{ij}\bm P^{-1}_{0,j}(p,q).\label{www2}
\ena
By this equation and \Cref{l3}, we can easily obtain that
 \bna \label{11}&&\widetilde{\bm\theta}^T_{t_k,i}(p'_i,q'_i)\sum^n_{j=1}\sum^{t_k-1}_{l=0}a^{(t_k-l)}_{ij}\bm\varphi_{l,j}(p,q)\bm\varphi^T_{l,j}(p'_i,q'_i) \widetilde{\bm\theta}_{t_k,i}(p'_i,q'_i)\nonumber\\&=&O(\log r_{t_k}(p'_i,q'_i))=O(\log(r_{t_k}(p^*,q^*)).
 \ena
 Substituting  (\ref{www10}) and (\ref{11}) into (\ref{7}), we see that there exists a positive constant $C_1$ satisfying
\bna
&&\sigma_{t_k,i}(p'_i,q'_i,\bm\theta_{t_k,i}(p'_i,q'_i))-\sum^n_{j=1}\sum^{t_k-1}_{l=0}a^{(t_k-l)}_{ij}w^2_{l+1,j}
\geq-C_1\log(r_{t_k}(p^*,q^*).
\label{13}
\ena

Now, we will consider $\sigma_{t_k,i}(p_0,q_0,\bm\theta_{t_k,i}(p_0,q_0))$. By \cref{wl2}, we have for $p\geq p_0$ and $q\geq q_0$,
\bna
&&\sum^n_{j=1}\sum^{t_k-1}_{l=0}a^{(t_k-l)}_{ij}\bm \varphi_{l,j}(p,q)w_{l+1,j}\nonumber\\&=&\sum^n_{j=1}a^{(t_k)}_{ij}\bm P^{-1}_{0,j}(p,q)\widetilde{\bm \theta}_{0,j}(p,q)-\bm P^{-1}_{t_k,i}(p,q)\widetilde{\bm \theta}_{t_k,i}(p,q).\label{www1}
\ena
By a similar way as that used in (\ref{7}), we obtain
\bna
&&\sigma_{t_k,i}(p_0,q_0,\bm\theta_{t_k,i}(p_0,q_0))-\sum^n_{j=1}\sum^{t_k-1}_{l=0}a^{(t_k-l)}_{ij}w^2_{l+1,j}\nonumber\\
&=&\widetilde{\bm\theta}^T_{t_k,i}(p_0,q_0)
\left(\sum^n_{j=1}\sum^{t_k-1}_{l=0}a^{(t_k-l)}_{ij}\bm\varphi_{l,j}(p_0,q_0)\bm\varphi^T_{l,j}(p_0,q_0)\right) \widetilde{\bm\theta}_{t_k,i}(p_0,q_0)\nonumber\\
&&+2\widetilde{\bm\theta}^T_{t_k,i}(p_0,q_0)\left(\sum^n_{j=1}\sum^{t_k-1}_{l=0}a^{(t_k-l)}_{ij}\bm\varphi_{l,j}
((p_0,q_0)w_{l+1,j}\right).\nonumber
\ena
Combining this with (\ref{www2}) and (\ref{www1}) yields
\bna
&&\sigma_{t_k,i}(p_0,q_0,\bm\theta_{t_k,i}(p_0,q_0))-\sum^n_{j=1}\sum^{t_k-1}_{l=0}a^{(t_k-l)}_{ij}w^2_{l+1,j}\nonumber\\
&=&\widetilde{\bm\theta}^T_{t_k,i}(p_0,q_0){{\bm P}}^{-1}_{t_k,i}(p_0,q_0) \widetilde{\bm\theta}_{t_k,i}(p_0,q_0)\nonumber\\
&&-\widetilde{\bm\theta}^T_{t_k,i}(p_0,q_0)\left(\sum^n_{j=1}a^{(t_k)}_{ij}\bm P^{-1}_{0,j}(p_0,q_0)\right) \widetilde{\bm\theta}_{t_k,i}(p_0,q_0)\nonumber\\
&&+2\widetilde{\bm\theta}^T_{t_k,i}(p_0,q_0)\left(\sum^n_{j=1}a^{(t_k)}_{ij}\bm P^{-1}_{0,j}(p_0,q_0)\widetilde{\bm\theta}_{0,j}(p_0,q_0)-\bm P^{-1}_{t_k,i}(p_0,q_0)\widetilde{\bm\theta}_{t_k,i}(p_0,q_0)\right)\nonumber\\
&\leq&-\widetilde{\bm\theta}^T_{t_k,i}(p_0,q_0)\left(\sum^n_{j=1}a^{(t_k)}_{ij}\bm P^{-1}_{0,j}(p_0,q_0)\right) \widetilde{\bm\theta}_{t_k,i}(p_0,q_0)\nonumber\\
&&+2\widetilde{\bm\theta}^T_{t_k,i}(p_0,q_0)\left(\sum^n_{j=1}a^{(t_k)}_{ij}\bm P^{-1}_{0,j}(p_0,q_0)\widetilde{\bm\theta}_{0,j}(p_0,q_0)\right)\nonumber\\
&&\leq  \left(\sum^n_{j=1}a^{(t_k)}_{ij}\widetilde{\bm\theta}^T_{0,j}(p_0,q_0)\bm P^{-1}_{0,j}(p_0,q_0)\widetilde{\bm\theta}_{0,j}(p_0,q_0)\right)=O(1),\label{12}
\ena
where the last inequality is obtained by
 \bna
 2\bm x^T\bm A\bm y\leq \bm x^T\bm A\bm x+\bm y^T\bm A\bm y  ~~for ~~\bm A\geq 0.\label{www11}
 \ena
From (\ref{13}) and (\ref{12}), we have,
\ban
&&\sigma_{t_k,i}(p'_{i},q'_{i},\bm\theta_{t_k,i}(p'_{i},q'_{i}))-\sigma_{t_k,i}(p_0,q_0,\bm\theta_{t_k,i}(p_0,q_0))\geq -C_1\log r_{t_k}(p^*,q^*)-C_2,
\ean
where $C_2$ is a positive constant.
Note that  $(p_{t_k,i},q_{t_k,i})=\arg\min_{p,q\in M}L_{t_k,i}(p,q)$. By \Cref{a5}, we have
\ban
0&\geq& L_{t_k,i}(p_{t_k,i},q_{t_k,i})- L_{t_k,i}(p_{0},q_{0})=L_{t_k,i}(p'_{i},q'_{i})- L_{t_k,i}(p_{0},q_{0})\\
&=&\sigma_{t_k,i}(p'_i,q'_i,\bm\theta_{t_k,i}(p'_{i},q'_{i}))-\sigma_{t_k,i}(p_0,q_0,\bm\theta_{t_k,i}(p_{0},q_{0}))
+(p'_i+q'_i-p_0-q_0)a_{t_k}\\
&\geq& -C_1\log r_{t_k}(p^*,q^*)-C_2+(p'_i+q'_i-p_0-q_0)a_{t_k}\\
&=& a_{t_k}\left(\frac{-C_1\log r_{t_k}(p^*,q^*)}{a_{t_k}}+(p'_i+q'_i-p_0-q_0)\right)-C_2\rightarrow\infty, \hbox{as}\ k\rightarrow\infty,
\ean
which leads to the contradiction. Thus, the situation (i) will not happen.

In the following, we will show the impossibility of situation (ii) by reduction to absurdity. Suppose that (ii) holds, i.e., $p'_i<p_0 $ or $q'_i<q_0 $.
%For such a case,  {\color{red} based on $\{\bm\varphi_{l,j}(p'_i, q'_i)\}$, from \cref{algorithm1}, we can only obtain the estimates for the partial components of $\bm\theta(p'_i\vee p_0, q'_i\vee q_0)$, i.e., $\bm\theta_{t_k,i}(p'_i, q'_i)\triangleq [b^i_{1,t_k},\cdots, b^i_{{p'_i},t_k},c^i_{1,t_k},\cdots,c^i_{q'_{i},t_k}]^T$.}
 In order to analyze the properties of the estimate error, we introduce the following $(s_i+v_i)$-dimensional vector with $s_i=\max\{p_0, p'_i\}$, $v_i=\max\{q_0, q'_i\}$,
\ban \bm\theta_{t_k,i}(s_i,v_i)=[b^i_{1,t_k},\cdots, b^i_{{s_i},t_k},c^i_{1,t_k},\cdots,c^i_{v_{i},t_k}]^T.\ean
If $p'_i<p_0 $, then $b^i_{m,t_k}\triangleq0$ for $p'_i<m\leq p_0$;  and if $q'_i<q_0 $, then $c^i_{m,t_k}\triangleq0$ for $q'_i<m\leq q_0$.

Denote
$ \widetilde{\bm\theta}_{t_k,i}(s_i,v_i)=\bm\theta(s_i,v_i)-\bm\theta_{t_k,i}(s_i,v_i). $ It is clear that\bna
\| \widetilde{\bm\theta}_{t_k,i}(s_i,v_i)\|^2\geq \min\{|b_{p_0}|^2,|c_{q_0}|^2\}\triangleq \alpha_0>0.\label{17}
\ena
Then by (\ref{2.6}), we have
\ban
&&\sigma_{t_k,i}(p'_i,q'_i,\bm\theta_{t_k,i}(p'_i,q'_i))\\
&&=\sum^n_{j=1}\sum^{t_k-1}_{l=0}a^{(t_k-l)}_{ij}[{\bm\theta^T}(p_0,q_0)\bm\varphi_{l,j}(p_0,q_0)
-{\bm\theta^T_{t_k,i}}(p'_i,q'_i)\bm\varphi_{l,j}(p'_i,q'_i)+w_{l+1,j}]^2.
\ean
Hence combining this equation and the definition $\widetilde{\bm\theta}_{t_k,i}(s_i,v_i)$, we obtain
\bna
&&\sigma_{t_k,i}(p'_i,q'_i,\bm\theta_{t_k,i}(p'_i,q'_i))-\sum^n_{j=1}\sum^{t_k-1}_{l=0}a^{(t_k-l)}_{ij}w^2_{l+1,j}\nonumber\\
&=&\widetilde{\bm\theta}^T_{t_k,i}(s_i,v_i)
\bm P^{-1}_{t_k,i}(s_i,v_i) \widetilde{\bm\theta}_{t_k,i}(s_i,v_i)-\widetilde{\bm\theta}^T_{t_k,i}(s_i,v_i)\left(\sum^n_{j=1}a^{(t)}_{ij}\bm P^{-1}_{0,j}(s_i,v_i)\right)\widetilde{\bm\theta}_{t_k,i}(s_i,v_i)\nonumber\\
&&+2\widetilde{\bm\theta}^T_{t_k,i}(s_i,v_i)\left(\sum^n_{j=1}\sum^{t_k-1}_{l=0}a^{(t_k-l)}_{ij}\bm\varphi_{l,j}(s_i,v_i)w_{l+1,j}\right)\nonumber\\
&&~~~~~\triangleq\widetilde{\bm\theta}^T_{t_k,i}(s_i,v_i)
\bm P^{-1}_{t_k,i}(s_i,v_i) \widetilde{\bm\theta}_{t_k,i}(s_i,v_i) -M_1+M_2.\label{www8}
\ena
By  (\ref{www2}) and  \cref{remark31}, we have for any $p$ and $q$
\bna
\lambda_{\min}(\bm P^{-1}_{t_k,i}(p,q))
\geq a_{\min}\lambda^{p,q}_{\min}(t_k),\label{order4}
\ena
where $a_{\min}=\min_{i,j\in\mathcal{V}}a^{(D_{\mathcal{G}})}_{ij}>0.$
Hence, by (\ref{order4}) and \Cref{l3}, we have for $p\geq p_0$ and $q\geq q_0$,
\bna
\sum^n_{i=1}\|\widetilde{\bm\theta}_{t+1,i}(p,q)\|^2=O\left(\frac{\log r_t(p,q)}{\lambda^{p,q}_{\min}(t)}\right).
\label{order2}
\ena
When $p'_i<p_0$ (so does the case $q'_i<q_0$), we can use (\ref{order2}) in  the first $p'_i$ components of $\widetilde{\bm\theta}_{t_k,i}(s_i,v_i)$. Then by (\ref{mineig}) and \Cref{a5}, we obtain $\|\widetilde{\bm\theta}_{t_k,i}(s_i,v_i)\|=O(1)$, hence we have
\bna
M_1
\leq \lambda_{\max}\left(\sum^n_{j=1}a^{(t)}_{ij}\bm P^{-1}_{0,j}(s_i,v_i)\right)\|\widetilde{\bm\theta}_{t_k,i}(s_i,v_i)\|^2\label{www7}
=O(1).
\ena

In the following, we will analyze $M_2$. Similar to the analysis of (\ref{www10}), by \Cref{l5}, we have
\bna
|M_2|&\leq&\Big\|\widetilde{\bm\theta}^T_{t_k,i}(s_i,v_i)\bm P^{-\frac{1}{2}}_{t_k,i}(s_i,v_i)\Big\|\cdot\Bigg\|\bm P^{\frac{1}{2}}_{t_k,i}(s_i,v_i)
\left(\sum^n_{j=1}\sum^{t_k-1}_{l=0}a^{(t_k-l)}_{ij}\bm\varphi_{l,j}(s_i,v_i)w_{l+1,j}\right)\Bigg\|\nonumber\\
&=&O\left\{\Big\|\widetilde{\bm\theta}^T_{t_k,i}(s_i,v_i)\bm P^{-\frac{1}{2}}_{t_k,i}(s_i,v_i)\Big\|\cdot\log^{\frac{1}{2}}(r_{t_k}(s_i,v_i))\right\}\nonumber\\
&=&O\left\{\Big\|\widetilde{\bm\theta}^T_{t_k,i}(s_i,v_i)\bm P^{-\frac{1}{2}}_{t_k,i}(s_i,v_i)\Big\|\cdot\log^{\frac{1}{2}}(r_{t_k}(p^*,q^*))\right\}.
\label{new8}
\ena
Therefore, by (\ref{www8})-(\ref{new8}), we see that there exist two positive constants $C_3$ and $C_4$ such that
\bna
&&\sigma_{t_k,i}(p'_i,q'_i,\bm\theta_{t_k,i}(p'_i,q'_i))-\sum^n_{j=1}\sum^{t_k-1}_{l=0}a^{(t_k-l)}_{ij}w^2_{l+1,j}\nonumber\\
&\geq&\widetilde{\bm\theta}^T_{t_k,i}(s_i,v_i)
\bm P^{-1}_{t_k,i}(s_i,v_i) \widetilde{\bm\theta}_{t_k,i}(s_i,v_i)-C_3\nonumber\\
&&-C_4\Big\|\widetilde{\bm\theta}^T_{t_k,i}(s_i,v_i)\bm P^{-\frac{1}{2}}_{t_k,i}(s_i,v_i)\Big\|\cdot\log^{\frac{1}{2}}(r_{t_k}( p^*,q^*)).\label{order3}
\ena
By \Cref{a5}, (\ref{17}) and (\ref{order4}), we have
\ban
\Big\|\widetilde{\bm\theta}^T_{t_k,i}(s_i,v_i)\bm P^{-\frac{1}{2}}_{t_k,i}(s_i,v_i)\Big\|\cdot\log^{\frac{1}{2}}(r_{t_k}( p^*,q^*))
=o(\widetilde{\bm\theta}^T_{t_k,i}(s_i,v_i)
\bm P^{-1}_{t_k,i}(s_i,v_i) \widetilde{\bm\theta}_{t_k,i}(s_i,v_i)).
\ean

Furthermore,  by (\ref{17}) and (\ref{order4}), we have
\bna
&&\sigma_{t_k,i}(p'_i,q'_i,\bm\theta_{t_k,i}(p'_i,q'_i))-\sum^n_{j=1}\sum^{t_k-1}_{l=0}a^{(t_k-l)}_{ij}w^2_{l+1,j}\nonumber\\
&=& a_{\min}\alpha_0\lambda^{s_i,v_i}_{\min}(t_k)(1+o(1))-C_3\nonumber\\
&\geq&\frac{a_{\min}\alpha_0\min\{\lambda^{p_0,q^*}_{\min}(t_k),\lambda^{p^*,q_0}_{\min}(t_k)\}}{2}-C_3,\label{18}
\ena
where (\ref{mineig}) is used in the last inequality.

By  (\ref{criterion}), (\ref{12}), (\ref{18}) and \Cref{a5}, for large $k$ and some positive constant $C_5$, we have the following inequality for $i\in\{1,2,\cdots,n\}$,
\ban
0&\geq& L_{t_k,i}(p_{t_k,i},q_{t_k,i})- L_{t_k,i}(p_{0},q_{0})=L_{t_k,i}(p'_{i},q'_{i})- L_{t_k,i}(p_{0},q_{0})\\
&=&\sigma_{t_k,i}(p'_i,q'_i,\bm\theta_{t_k,i}(p'_{i},q'_{i}))-\sigma_{t_k,i}(p_0,q_0,\bm\theta_{t_k,i}(p_{0},q_{0}))
+(p'_i+q'_i-p_0-q_0)a_{t_k}\\
&\geq&\frac{a_{\min}\alpha_0\min\{\lambda^{p_0,q^*}_{\min}(t_k),\lambda^{p^*,q_0}_{\min}(t_k)\}}{2}-C_5+(p'_i+q'_i-p_0-q_0)a_{t_k}\\
&\geq&\frac{a_{\min}\alpha_0\min\{\lambda^{p_0,q^*}_{\min}(t_k),\lambda^{p^*,q_0}_{\min}(t_k)\}}{2}\Big(1+\frac{2(p'_i+q'_i-p_0-q_0)a_{t_k}}
{{a_{\min}\alpha_0\min\{\lambda^{p_0,q^*}_{\min}(t_k),\lambda^{p^*,q_0}_{\min}(t_k)\}}}\Big)-C_5\\
&\geq &\frac{a_{\min}\alpha_0\min\{\lambda^{p_0,q^*}_{\min}(t_k),\lambda^{p^*,q_0}_{\min}(t_k)\}}{4}-C_5\rightarrow\infty,
\ean
which leads to contradiction. The proof of the theorem is complete.
\end{proof}

\begin{remark}
Under  \cref{a1} and the weaker noise condition (\ref{noise}), by  \cref{rrr1}, we can verify that the result of \Cref{t1}
still holds by taking the sequence $\{a_t\}$ in \Cref{a5} to satisfy the following conditions,
\ban
\frac{\log {r_t(p^*,q^*)}(\log\log {r_t(p^*,q^*)})^{\tau}}{a_t}\xrightarrow[{t\rightarrow\infty}] {} 0, ~~
\frac{a_t}{\lambda^{p,q}_{\min}(t)}\xrightarrow[{t\rightarrow\infty}] {} 0,\ \  ~~~{\rm a.s.},
\ean
where $(p,q)\in M^*$.
\end{remark}

\begin{remark} In \Cref{algorithm2}, if the estimates $(p_{t,i}, q_{t,i})$ for the system order $(p_0, q_0)$ are obtained by the following step,
\bna
p_{t,i}={\arg\min}_{1\leq p\leq p^*}L_{t,i}(p,q^*),~~q_{t,i}={\arg\min}_{1\leq q\leq q^*}L_{t,i}(p^*,q),\label{www4}
\ena
then the sequence $(p_{t,i}, q_{t,i})$ can also converge to the true order $(p_0, q_0)$ by a similar argument
as that used in \Cref{t1}. At each time instant $t$, we need to search $p^*\times q^*$ points to find the minimum of the function $L_{t,i}(p,q)$ in (\ref{1}), while in (\ref{www4}) we just need to search at most $p^*+q^*$ points.
\end{remark}

Note that both the estimates $(p_{t,i}, q_{t,i})$  and the true orders $(p_0, q_0)$ are integers,
from \Cref{t1}, we see that there exists a large enough $T$ such that $p_{t,i}=p_0$ and $q_{t,i}=q_0$ for $t\geq T$ . Thus, from (\ref{order2}) and \Cref{a5},
we have the following  consistent estimation of the parameter vector $\bm \theta(p_0,q_0)$.
\begin{theorem}\label{knownparameter}
Under the conditions of \Cref{t1}, for any $i\in\{1,\cdots,n\}$, we have
\ban
\bm \theta_{t,i}(p_{t,i},q_{t,i})\xrightarrow[{t\rightarrow\infty}] {}  \bm \theta(p_0,q_0), ~~~{\rm a.s.}
\ean
where $\bm \theta_{t,i}(p_{t,i},q_{t,i})$ is obtained by \cref{algorithm2}.
\end{theorem}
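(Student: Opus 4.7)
The plan is to reduce the problem to the known-order case and then invoke the already-proved parameter-error bound (\ref{order2}) for distributed LS. By \Cref{t1}, for each fixed $i\in\{1,\dots,n\}$ we have $(p_{t,i},q_{t,i})\to(p_0,q_0)$ almost surely. Since both the estimate and the target take integer values in the finite set $M$, this convergence implies that on an almost-sure event there exists a (sample-dependent) integer $T$ such that $(p_{t,i},q_{t,i})=(p_0,q_0)$ for every $t\geq T$. For such $t$, Step 3 of \Cref{algorithm2} simply outputs the distributed LS estimate $\bm\theta_{t,i}(p_0,q_0)$ produced by \Cref{algorithm1} with the true orders, so it suffices to establish the almost-sure convergence $\bm\theta_{t,i}(p_0,q_0)\to\bm\theta(p_0,q_0)$ along the full time index.

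To do this I would specialise (\ref{order2}) to $(p,q)=(p_0,q_0)$, obtaining
\ban
\sum_{i=1}^{n}\|\widetilde{\bm\theta}_{t+1,i}(p_0,q_0)\|^2 = O\!\left(\frac{\log r_t(p_0,q_0)}{\lambda^{p_0,q_0}_{\min}(t)}\right),
\ean
so the task collapses to showing that the right-hand side tends to zero almost surely. For the numerator, the regression vector $\bm\varphi_{k,j}(p_0,q_0)$ is a subvector of $\bm\varphi_{k,j}(p^*,q^*)$, hence $r_t(p_0,q_0)\leq r_t(p^*,q^*)$ and $\log r_t(p_0,q_0)\leq\log r_t(p^*,q^*)$ directly. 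For the denominator, the same subvector relation plus (\ref{mineig}) applied to the regression outer-product matrices gives
\ban
\lambda_{\min}\!\left(\sum_{j=1}^{n}\sum_{k=0}^{t-D_{\mathcal{G}}+1}\bm\varphi_{k,j}(p_0,q_0)\bm\varphi_{k,j}^T(p_0,q_0)\right)\geq \lambda_{\min}\!\left(\sum_{j=1}^{n}\sum_{k=0}^{t-D_{\mathcal{G}}+1}\bm\varphi_{k,j}(p_0,q^*)\bm\varphi_{k,j}^T(p_0,q^*)\right),
\ean
and the analogous inequality with $(p^*,q_0)$ in place of $(p_0,q^*)$. Since the initial contributions $\sum_j\bm P^{-1}_{0,j}(\cdot,\cdot)$ are deterministic and bounded, a Weyl-type perturbation gives
\ban
\lambda^{p_0,q_0}_{\min}(t)\geq \max\!\left\{\lambda^{p_0,q^*}_{\min}(t),\,\lambda^{p^*,q_0}_{\min}(t)\right\}-O(1).
\ean

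Combining these two estimates with \Cref{a5} yields
\ban
\frac{\log r_t(p_0,q_0)}{\lambda^{p_0,q_0}_{\min}(t)}\leq \frac{\log r_t(p^*,q^*)}{a_t}\cdot\frac{a_t}{\min\{\lambda^{p_0,q^*}_{\min}(t),\lambda^{p^*,q_0}_{\min}(t)\}-O(1)}\xrightarrow[t\to\infty]{}0,\quad\mathrm{a.s.},
\ean
so $\sum_{i=1}^{n}\|\widetilde{\bm\theta}_{t,i}(p_0,q_0)\|^2\to 0$ almost surely and in particular $\bm\theta_{t,i}(p_0,q_0)\to\bm\theta(p_0,q_0)$ for every $i$. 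Composing this with the ``eventually equal'' conclusion drawn from \Cref{t1} finishes the proof. The only mildly delicate point is the submatrix bookkeeping above: because the regressor is partitioned into a $y$-block and a $u$-block, one must verify (via an obvious permutation) that the regressor at $(p_0,q_0)$ sits as a genuine principal subvector inside those at $(p_0,q^*)$ and $(p^*,q_0)$ before (\ref{mineig}) can be invoked; beyond this, the argument is routine.
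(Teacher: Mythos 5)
Your proposal is correct and follows essentially the same route as the paper: Theorem 3.6 plus integrality gives that the order estimates eventually coincide with $(p_0,q_0)$, after which the conclusion is read off from (\ref{order2}) together with Assumption 3.3, using (\ref{mineig}) and the subvector structure to transfer the excitation bounds from $M^*=\{(p_0,q^*),(p^*,q_0)\}$ to $(p_0,q_0)$. The paper states this in two sentences and leaves the bookkeeping implicit; you have simply written it out in full, and your accounting of the $O(1)$ contributions from the initial matrices and of the permutation needed for the $(p^*,q_0)$ comparison is consistent with how the same estimates are used inside the proof of Theorem 3.6.
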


\section{Case II: The upper bounds of true orders are unknown}\label{section4}
In this section, we consider a general case where the upper bounds of the system orders are unknown. We first give some assumptions on the system signals and the noise.
\begin{assumption}\label{wa2}
For $i\in\{1,\cdots,n\}$, the noise sequence $\{w_{t,i},\mathscr{F}_t\}$ is a martingale difference sequence satisfying
\ban
\sup_{t\geq 0}E[|w_{t+1,i}|^{2}|\mathscr{F}_t]<\infty, ~~\|w_{t,i}\|=O(\eta_i(t)) ~~{\rm a.s.}
\ean
where
$\mathscr{F}_t$  is defined in \Cref{a2} and  $\eta_i(t)$ is a positive, deterministic, nondecreasing function satisfying
\ban
\sup_t \eta_i(e^{t+1})/\eta_i(e^t)<\infty.
\ean
\end{assumption}

In order to simplify the analysis of the estimation error, we need to introduce an assumption on the input and output signals which implies that
the system is not explosive. This assumption is commonly used in the stability analysis of the closed-loop feedback control systems for a single sensor case (see e.g., \cite{wc1}, \cite{wwc3} and \cite{wc5}).
\begin{assumption}\label{wa3}
There exists a positive constant $b$ such that the input and output signals satisfy
\bna
\sum^n_{i=1}\sum^{t-1}_{k=0}(\|y_{k,i}\|^2+\|u_{k,i}\|^2)=O(t^b) ~~ {\rm a.s.}\label{www13}
\ena
\end{assumption}

Similar to \cref{a5} in \cref{www12}, we introduce the following cooperative excitation condition which can be considered
as an extension of the excitation condition used in \cite{wc5} for a single sensor to the distributed order estimation algorithm when the upper bounds of true orders are unknown. This condition can also reflect the joint effect of multiple sensors as illustrated in \cref{joint}.

\begin{assumption}\label{wa4}{\rm (Cooperative Excitation Condition II).}
A sequence $\{\bar a_t\}$ of positive real numbers can be found such that ${\bar{a}_t}\xrightarrow[{t\rightarrow\infty}] {} \infty$ and
\bna
\frac{h_t\log t+[\eta(t)\log\log t]^2}{\bar{a}_t}\xrightarrow[{t\rightarrow\infty}] {} 0, ~~~~~~~~\frac{ {\bar{a}_t}}{\lambda^{0}_{\min}(t)}\xrightarrow[{t\rightarrow\infty}] {} 0,
\ena
hold almost surely, where
\ban
\lambda^{0}_{\min}(t)= \lambda_{\min}\left\{\sum^n_{j=1}\bm P^{-1}_{0,j}(m_0,m_0)+\sum^n_{i=1}\sum^{t-D_{\mathcal{G}}}_{k=0}\bm \varphi^0_{k,i}(\bm \varphi^{0}_{k,i})^T\right\},
\ean
with $\eta(t)\triangleq (\sum^n_{i=1}\eta^2_i(t))^{\frac{1}{2}}$, $\bm \varphi^0_{t,i}=[y_{t,i},\cdots, y_{t-m_0+1,i},u_{t,i},\cdots,u_{t-m_0+1,i},]^T$, $m_0\triangleq\max\{p_0,q_0\}$ and the regression lag
$h_t$ is chosen as $h_t=O((\log t)^{\alpha})(\alpha>1)$, and $\log t=o(h_t)$.
\end{assumption}

We are now to construct the algorithm to estimate both the system orders and parameters in a distributed way when the upper bounds of orders are unknown.
For estimating the unknown orders $(p_0, q_0)$, we introduce the following local information criterion $\bar {L}_{t,i}(p,q)$ for the sensor $i$,
\bna
\bar {L}_{t,i}(p,q)=\sigma_{t,i}(p,q,\bm\theta_{t,i}(p,q))+(p+q)\bar{a}_t,\label{wcriterion}
\ena
where $\sigma_{t,i}(p,q,\bm\beta(p,q))$ is recursively defined in (\ref{3}).

By minimizing the local information criterion (\ref{wcriterion}) and using \cref{algorithm1}, we obtain the following distributed algorithm.
\begin{algorithm}[!ht]
\caption{ }\label{algorithm3}
 For any given sensor $i\in\{1,\cdots,n\}$,  the distributed algorithm for the estimation of the system orders and the parameters  is  defined at the time instant $t\geq1$ as follows.

$\mathbf{Step\ 1:}$ For any $0\leq s\leq [\log t]$, based on $\{\bm \varphi_{k,j}(s,s), y_{k+1,j}\}^{t-1}_{k=1}$,
$j\in N_i$,
 the estimate  $\bm\theta_{t,i}(s,s)$  can  obtained by using \Cref{algorithm1}, where the orders $(p, q)$ are replaced by  $(s,s)$ ($0\leq s\leq [\log t]$).

$\mathbf{Step\ 2:}$ (Order estimation)  With the estimates $\{\bm\theta_{t,i}(s,s)\}^{[\log t]}_{s=0}$ obtained by  Step 1,

\hskip 2.5cm take $\hat m_{t,i}$ by minimizing $\bar {L}_{t,i}(s,s)$ for $0\leq s\leq [\log t]$;

\hskip 2.5cm take $\hat p_{t,i}$ by minimizing  $\bar {L}_{t,i}(p,\hat m_{t,i})$ for $0\leq p\leq \hat m_{t,i}$;

\hskip 2.5cm take $\hat q_{t,i}$ by minimizing  $\bar {L}_{t,i}(\hat p_{t,i},q)$ for $0\leq q\leq \hat m_{t,i}$.

$\mathbf{Step\ 3:}$ (Parameter estimation)
The  estimate $\bm\theta_{t,i}(\hat p_{t,i},\hat q_{t,i})$ for the unknown parameter vector $\bm\theta(p_0,q_0)$ is  obtained by using \Cref{algorithm1}, where the orders $(p, q)$ are replaced by the
estimates $(\hat p_{t,i},\hat q_{t,i})$ obtained by  Step 2.

$\mathbf{Output:}$ $\hat p_{t,i},\hat q_{t,i}$  and $\bm\theta_{t,i}(\hat p_{t,i},\hat q_{t,i})$.
\end{algorithm}

\begin{remark}
In Step 2 of the above \cref{algorithm3}, we first estimate the maximum value $m_0$ of true orders whose upper bound is characterized by the function $\log t$. Then the true orders $p_0, q_0$ are obtained by searching among at most $2\hat m_{t,i}$ points at each time instant $t$.
\end{remark}

In the following, we will provide the consistency analysis of \Cref{algorithm3} when the upper bounds of orders are unknown, in which a crucial step is to prove that for any $i$, $\hat m_{t,i} \rightarrow m_0$ as $t\rightarrow\infty$. Then  by  the order estimation procedure in \Cref{algorithm3}, the convergence of the estimates for the system orders and parameters can be obtained by a similar analysis as those in \cref{www12}. To this end, we need to introduce the following double array  martingale estimation lemma to deal with the noise effect in the form of $\max_{1\leq m\leq h_t}\left\|\sum^t_{k=1}f_k(m)w_{k+1}\right\|$.

\begin{lemma}{\rm \cite{wc5}}\label{wlb}
Let $\{v_t,\mathscr{F}_t\}$ be an $s'$-dimensional martingale difference sequence satisfying
$\|v_t\|=o(\rho(t))~~{\rm a.s.},$, $\sup_t E(\|v_{t+1}\|^2|\mathscr{F}_t)<\infty$ {\rm a.s.},
 where the properties of $\rho(t)$ is described as same as $\eta_i(t)$ in \Cref{wa2}. Assume that $f_t(m), t, m=1,2,...,$ is an $\mathscr{F}_t$-measurable, $r'\times s'$-dimensional random matrix satisfying $\|f_t(m)\|\leq C<\infty$ {\rm a.s.} for all t, m and some deterministic constant C. Then for $h_t=O([\log t]^{\alpha})$ ~$(\alpha>1)$, the following property holds as $t\rightarrow\infty$,
 \ban
 \max_{1\leq m\leq h_t}\max_{1\leq j\leq t}\left\|\sum^j_{k=1}f_k(m)v_{k+1}\right\|=O\left(\max_{1\leq m\leq h_t}\sum^t_{k=1}\|f_k(m)\|^2\right)+o(\rho(t)\log\log t),~~{\rm a.s.}
 \ean
\end{lemma}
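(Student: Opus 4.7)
The plan is to adapt the Lai--Wei/Freedman martingale exponential inequality uniformly in the index $m$, exploiting the polylogarithmic growth $h_t=O((\log t)^{\alpha})$ so that a union bound over $m$ costs only a polynomial factor in $\log t$.

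First I would reduce to bounded increments by truncation: set $\tilde v_{k+1}=v_{k+1}\mathbf{1}_{\{\|v_{k+1}\|\leq\rho(k+1)\}}$ and $\bar v_{k+1}=\tilde v_{k+1}-E[\tilde v_{k+1}\mid\mathscr{F}_k]$. The a.s.\ bound $\|v_k\|=o(\rho(k))$ means the untruncated tail $v_{k+1}-\tilde v_{k+1}$ is eventually zero along almost every sample path, so it contributes only an $O(1)$ additive error uniformly in $m$. The compensator $E[\tilde v_{k+1}\mid\mathscr{F}_k]$ has magnitude $O(1/\rho(k+1))$ by Chebyshev combined with $\sup_k E[\|v_{k+1}\|^{2}\mid\mathscr{F}_k]<\infty$, so after multiplication by $f_k(m)$ (bounded by $C$) its total contribution is $O(t/\rho(t))$, absorbable into one of the two right-hand-side terms.

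Next, for each fixed $m$ the centered martingale $S_j(m)=\sum_{k=0}^{j-1}f_k(m)\bar v_{k+1}$ has increments of size $O(\rho(k+1))$ and conditional quadratic variation $V_j(m)=O(\sum_{k=0}^{j-1}\|f_k(m)\|^{2})$. I would apply Freedman's maximal inequality,
\ban
P\Bigl(\max_{1\leq j\leq t}\|S_j(m)\|\geq\lambda,\ V_t(m)\leq v\Bigr)\leq 2r'\exp\Bigl(-\frac{\lambda^{2}}{2(v+C\rho(t)\lambda)}\Bigr),
\ean
then union-bound over $m\leq h_{t_n}=O(n^{\alpha})$ along the geometric subsequence $t_n=\lceil e^{n}\rceil$ and over dyadic levels $v=2^{i}$ with $0\leq i\leq\log_2(C^{2}t_n)$. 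Setting the threshold $\lambda_n = A\,v/\rho(t_n) + B\,\rho(t_n)\log\log t_n$ with $A,B$ taken large enough, the Freedman exponent exceeds $(\alpha+2)\log n$, which beats the polynomial-in-$n$ union cardinality; Borel--Cantelli then gives the claimed bound eventually along $\{t_n\}$. Interpolation to all $t\in[t_n,t_{n+1})$ follows from monotonicity of the outer $\max_{1\leq j\leq t}$ together with the quasi-regularity $\sup_t\rho(e^{t+1})/\rho(e^t)<\infty$.

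The principal obstacle is calibrating the two pieces of $\lambda_n$ simultaneously across both regimes of $V_t(m)$: when $V_t(m)$ is large, the linear piece $A\,v/\rho(t_n)$ has to supply the dominant $O(\max_m\sum\|f_k(m)\|^{2})$ contribution on the right, while when $V_t(m)$ is small, the Gaussian piece $B\,\rho(t_n)\log\log t_n$ has to absorb the fluctuation. The exponential subsequence $t_n=e^{n}$ (rather than $2^{n}$) is the right spacing because it makes $\log\log t_n\asymp\log n$, exactly matching the $n^{\alpha}$ union cost once $B$ is chosen large -- this is precisely what the hypothesis $\sup_t\rho(e^{t+1})/\rho(e^{t})<\infty$ is tailored for. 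This calibration, together with the bookkeeping for the truncated compensator, is where most of the effort goes; the remaining steps follow the classical Lai--Wei template.
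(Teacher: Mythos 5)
First, a point of comparison: the paper does not prove this lemma at all --- it is quoted verbatim from \cite{wc5} (Huang and Guo, Ann.\ Statist.\ 1990) as a known double-array martingale estimate, so there is no in-paper argument to measure your proposal against. That said, your overall strategy --- truncation, an exponential (Freedman/Lai--Wei type) martingale inequality for each fixed $m$, a union bound over the $O((\log t)^{\alpha})$ values of $m$ and over dyadic levels of the quadratic variation along the geometric subsequence $t_n=\lceil e^n\rceil$, Borel--Cantelli, and interpolation via $\sup_t\rho(e^{t+1})/\rho(e^t)<\infty$ --- is indeed the classical template behind results of this type, and your observation that the union cost is only polynomial in $\log t$ is exactly why the polylogarithmic bound on $h_t$ is the natural hypothesis.

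Two steps as written, however, do not deliver the stated conclusion. (i) Calibrating $\lambda_n=A\,v/\rho(t_n)+B\,\rho(t_n)\log\log t_n$ with $A,B$ ``taken large enough'' yields $O(\rho(t)\log\log t)$, not the claimed $o(\rho(t)\log\log t)$: in the small-$v$ regime the Freedman exponent is governed by the increment bound $C\rho(t)$ inherited from your truncation level, which forces $B$ to be of order $C$ rather than arbitrarily small. To obtain the little-$o$ one must truncate at $\epsilon\rho(k+1)$ (legitimate eventually, since $\|v_t\|=o(\rho(t))$), run the argument along a sequence $\epsilon_j\downarrow0$, and --- the delicate point --- check that the constant multiplying $\max_m\sum_k\|f_k(m)\|^2$ stays uniform in $\epsilon_j$; the regime $v\asymp\rho^2(t)\log\log t$, where the Gaussian fluctuation $\sqrt{v\log\log t}$ is itself of order $\rho(t)\log\log t$, is precisely where this uniformity must be verified. (ii) The compensator bookkeeping fails: $\sum_{k\le t}O(1/\rho(k+1))$ is only $O(t/\rho(1))$ for nondecreasing $\rho$, not $O(t/\rho(t))$, and in any case neither bound is absorbable into $O\bigl(\max_m\sum_k\|f_k(m)\|^2\bigr)+o(\rho(t)\log\log t)$, since the first right-hand term may be $O(1)$ and the second may be far smaller than $t/\rho(t)$ (take $\rho(t)=\log t$). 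The standard repair avoids compensation: one localizes to events on which the truncation never triggers beyond a fixed time and the random bound $\sup_tE(\|v_{t+1}\|^2\mid\mathscr{F}_t)$ is below a fixed level (these events exhaust the probability space), which is also needed before Freedman's inequality can be applied with a deterministic variance proxy. These gaps are repairable, but they sit exactly where the content of the lemma lies.
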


In order to simplify the expression of the following lemmas and theorems, we write $(l)$ for $(l,l)$ in $\bm \theta_{t,i}, \bm \varphi_{t,i}$ and $\bm P_{t,i}$ when $p=q=l$.

\begin{lemma} \label{wl6}
Let $V_t(l)=\widetilde{\bm \Theta}^T_t(l)\bm P^{-1}_t(l)\widetilde{\bm \Theta}_t(l)$. Then under \cref{a1} and Assumptions \ref{wa2}-\ref{wa4}, we have
\ban
\max_{ m_0\leq l\leq h_t}V_{t+1}(l)=O(h_t\log t)+o(\eta^2(t)\log\log t),
\ean
where $h_t$ and $\eta(t)$ are defined in \Cref{wa4}.
\end{lemma}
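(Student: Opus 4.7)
The plan is to adapt the stochastic Lyapunov function analysis used in the proof of \cref{l3} (see \cite{wc2}) to obtain a bound that is uniform in $l$, by invoking the double array martingale estimate \cref{wlb} in place of the classical scalar martingale estimation theorem. Set $\bm z_{t,i}(l)=\bm P^{-1}_{t,i}(l)\widetilde{\bm\theta}_{t,i}(l)$ so that $V_t(l)=\sum^n_{i=1}\bm z^T_{t,i}(l)\bm P_{t,i}(l)\bm z_{t,i}(l)$, and note that by \cref{wl2} the vector $\bm z_{t+1,i}(l)$ satisfies the recursion
\[\bm z_{t+1,i}(l)=\sum_{j\in N_i}a_{ij}\bigl(\bm z_{t,j}(l)-\bm\varphi_{t,j}(l)w_{t+1,j}\bigr).\]
Expanding $V_{t+1}(l)$ via this recursion and using the doubly stochastic property of $\mathcal{A}$ together with the matrix inequality $\mathscr{A}^T\bm P_{t+1}(l)\mathscr{A}\le\bar{\bm P}_{t+1}(l)$ (as in the proof of \cref{l5}), one obtains a decomposition of the form $V_{t+1}(l)\le V_t(l)+N_{t+1}(l)+2M_{t+1}(l)-D_{t+1}(l)$, where $N_{t+1}(l)$ is quadratic in the noise, $M_{t+1}(l)$ is a cumulative linear-in-noise martingale term, and $D_{t+1}(l)\ge 0$.

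First I would handle the quadratic noise term $N_{t+1}(l)$. Its conditional expectation reduces, by \cref{wla} and a telescoping argument, to a multiple of $\log\det\bm P^{-1}_{t+1}(l)$, and \cref{wa3} gives $\lambda_{\max}(\bm P^{-1}_{t+1,i}(l))=O(t^b)$ uniformly in $i$, so that $\log\det\bm P^{-1}_{t+1}(l)=O(l\log t)=O(h_t\log t)$ uniformly over $m_0\le l\le h_t$. The martingale fluctuation around this conditional expectation is itself controlled uniformly in $l$ by \cref{wlb} after a suitable normalization, yielding a remainder of order $o(\eta^2(t)\log\log t)$.

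Next I would bound the martingale term $M_{t+1}(l)$, which after rearrangement has the form $\sum^t_{k=0}\bm g^T_k(l)\bm W_{k+1}$ for some $\mathscr{F}_k$-measurable $\bm g_k(l)$ built out of $\bar{\bm P}_{k+1,j}(l)$, $\bm\varphi_{k,j}(l)$ and $\bm z_{k,j}(l)$. To invoke \cref{wlb} uniformly over $m_0\le l\le h_t$, I would factor $\bm g_k(l)$ so that the objects played by $f_k(m)$ in \cref{wlb} are of the form $\bm d^{1/2}_{k,j}(l)\bm\varphi^T_{k,j}(l)\bm P^{1/2}_{k,j}(l)$, whose operator norm is at most $1$ by \cref{wla}. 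A Cauchy--Schwarz step then yields a bound of the shape
\[|2M_{t+1}(l)|\le\varepsilon V_t(l)+\varepsilon^{-1}\bigl[O(h_t\log t)+o(\eta^2(t)\log\log t)\bigr]\]
uniformly in $l\in[m_0,h_t]$, with $\varepsilon\in(0,1)$ arbitrary. Choosing $\varepsilon$ small and iterating the resulting inequality from $t=0$ gives the claimed bound.

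The main obstacle I expect is the second step: the dimensions of $\bm\varphi_{t,i}(l)$ and $\bm P_{t,i}(l)$ grow linearly with $l$, so the test matrices whose norms must be controlled to apply \cref{wlb} depend nontrivially on $l$. Finding the right factorization so that the functions $f_k(l)$ fed to \cref{wlb} are uniformly bounded in both $k$ and $l$, while still absorbing the coupling induced by the weighted powers $\mathscr{A}^{t+1-k}$ into the stochastic-Lyapunov bookkeeping, is the delicate part. Once this is in place, combining Steps~1 and~2 with the elementary inequality $2ab\le\varepsilon a^2+\varepsilon^{-1}b^2$ closes the loop and gives $\max_{m_0\le l\le h_t}V_{t+1}(l)=O(h_t\log t)+o(\eta^2(t)\log\log t)$ as asserted.
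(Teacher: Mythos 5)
Your treatment of the quadratic noise term is exactly what the paper does: telescoping $\log\det\bm P^{-1}_{t+1}(l)$ via \cref{wla}, the bound $O(l\log t)=O(h_t\log t)$ uniformly in $l\le h_t$ from \cref{wa3}, and then \cref{wlb} applied to the centered sequence $\|\bm W_{k+1}\|^2-E(\|\bm W_{k+1}\|^2|\mathscr{F}_k)$ with the scalar coefficients $\lambda_{\max}\{\bm d_k(l)\bm\Phi^T_k(l)\bm P_k(l)\bm\Phi_k(l)\}\le 1$, which is precisely where the uniform boundedness hypothesis of \cref{wlb} is satisfied. That half of your argument is sound and matches the paper's equations (\ref{we3})--(\ref{we6}).

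The gap is in your second step, the cross term $M_{t+1}(l)$. The $\mathscr{F}_k$-measurable coefficient multiplying $\bm W_{k+1}$ there is (up to the network weights) $\bm z^T_k(l)\bar{\bm P}_{k+1}(l)\bm\Phi_k(l)=\bm z^T_k(l)\bm P_k(l)\bm\Phi_k(l)\bm d_k(l)$, whose norm is controlled only by $V^{1/2}_k(l)$ --- the very quantity you are trying to bound --- so the hypothesis $\|f_t(m)\|\le C<\infty$ of \cref{wlb} fails for it. Your proposed fix of factoring out the bounded piece $\bm d^{1/2}_{k}(l)\bm\varphi^T_{k}(l)\bm P^{1/2}_{k}(l)$ does not work: \cref{wlb} bounds the whole sum $\sum_k f_k(m)v_{k+1}$, and you cannot feed it only one factor of the coefficient while leaving the unbounded random factor $\bm z^T_k(l)\bm P^{1/2}_k(l)$ outside the sum; nor can you fold that factor into $v_{k+1}$, since it would destroy the martingale-difference structure. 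You correctly flag this as ``the delicate part,'' but it is not a technicality --- it is the missing idea. The paper circumvents it entirely: it invokes the proof of Lemma 4.4 in \cite{wc2} to write, for each $l\ge m_0$, $V_{t+1}(l)=O(1)+\sum^t_{k=0}\bm W^T_{k+1}\bm d_k(l)\bm\Phi^T_k(l)\bm P_k(l)\bm\Phi_k(l)\bm W_{k+1}$, i.e., the linear-in-noise term is already absorbed into the nonpositive term $-D_{t+1}(l)$ by a per-$l$ scalar martingale argument, so that the only place uniformity over $m_0\le l\le h_t$ is ever needed is the quadratic sum, where the coefficients are bounded by $1$. Restructuring your proof so that the cross term is disposed of before taking the maximum over $l$ (or citing \cite{wc2} as the paper does) is what you need to close the argument.
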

\begin{proof}
By the proof of Lemma 4.4 in \cite{wc2}, we have for $l\geq m_0$
\bna
V_{t+1}(l)
= O(1)+\sum^t_{k=0}\bm W^T_{k+1}\bm d_k(l)\bm \Phi^T_k(l)\bm P_k(l)\bm \Phi_k(l)\bm W_{k+1}.\label{we5}
\ena

By (\ref{www2}) and \Cref{wla}, we have
\bna
&&\max_{1\leq l\leq h_t}\sum^t_{k=0}\lambda_{\max}\{\bm d_k(l)\bm \Phi^T_k(l)\bm P_k(l)\bm \Phi_k(l)\}\nonumber\\
&\leq&\max_{1\leq l\leq h_t}[\log det(\bm P_{t+1}^{-1}(l))-\log det(\bm P_{0}^{-1}(l))]\nonumber\\
&=&O\bigg\{\max_{1\leq l\leq h_t}\Big(l\cdot\log \Big(\lambda_{\max}{\bm P_0^{-1}(l)}+\sum^n_{j=1}\sum^t_{k=0}\|\bm \varphi_{k,j}(l)\|^2\Big)\Big)\bigg\}=O(h_t\log t),\label{we3}
\ena
where \Cref{wa3} is used in the last equation. By \Cref{wa2}, \Cref{wla} and \Cref{wlb}, we have
\ban
&&\max_{1\leq l\leq h_t}\sum^t_{k=0}\lambda_{\max}\{\bm d_k(l)\bm \Phi^T_k(l)\bm P_k(l)\bm \Phi_k(l)\}(\|\bm W_{k+1}\|^2-E(\|\bm W_{k+1}\|^2|\mathscr{F}_k))\\
&=&o(\eta^2(t)\log\log t)+O\left(\max_{1\leq l\leq h_t}\sum^t_{k=0}\lambda_{\max}\{\bm d_k(l)\bm \Phi^T_k(l)\bm P_k(l)\bm \Phi_k(l)\}\right).
\ean
Hence by (\ref{we3}) and \Cref{wa2},  we have
\bna
&&\max_{1\leq l\leq h_t}\sum^t_{k=0}\bm W^T_{k+1}\bm d_k(l)\bm \Phi^T_k(l)\bm P_k(l)\bm \Phi_k(l)\bm W_{k+1}\nonumber\\
&\leq&\max_{1\leq l\leq h_t}\sum^t_{k=0}\lambda_{\max}\{\bm d_k(l)\bm \Phi^T_k(l)\bm P_k(l)\bm \Phi_k(l)\}\|\bm W_{k+1}\|^2\nonumber\\
&\leq&\max_{1\leq l\leq h_t}\sum^t_{k=0}\lambda_{\max}\{\bm d_k(l)\bm \Phi^T_k(l)\bm P_k(l)\bm \Phi_k(l)\}(\|\bm W_{k+1}\|^2-E(\|\bm W_{k+1}\|^2|\mathscr{F}_k))\nonumber\\
&&+\max_{1\leq l\leq h_t}\sum^t_{k=0}\lambda_{\max}\{\bm d_k(l)\bm \Phi^T_k(l)\bm P_k(l)\bm \Phi_k(l)\}E(\|\bm W_{k+1}\|^2|\mathscr{F}_k))\nonumber\\
&=&o(\eta^2(t)\log\log t)+O(h_t\log t).\label{we6}
\ena
Substituting (\ref{we6}) into (\ref{we5}) yields the result of the lemma.
\end{proof}

\begin{lemma}\label{wl11}
 Under \Cref{a1}, Assumptions \ref{wa2}-\ref{wa4}, for any $i\in\{1,...,n\}$, we have
\ban
&&\max_{1\leq l\leq h_t}\left\{\bm S^T_{t,i}(l)\bm P_{t,i}(l)\bm S_{ t,i }(l)\right\}
= O(h_t\log t)+o(\{\eta(t)\log\log t\}^2),
\ean
where $\bm S_{t,i}(l)=\left(\sum^n_{j=1}\sum^{t-1}_{k=0}a^{(t-k)}_{ij}\bm\varphi_{k,j}(l)w_{k+1,j}\right)$, and $h_t$ and $\eta(t)$ are defined in \Cref{wa4}.
\end{lemma}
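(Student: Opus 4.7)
\textbf{Proof proposal for \Cref{wl11}.}

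The plan is to mimic the proof of \Cref{l5}, but replace every ordinary martingale estimate by the double array version provided by \Cref{wlb}, so that all bounds become uniform in the lag parameter $l\in\{1,\dots,h_t\}$. First, stack the per-sensor quantities into $\bm S_t(l)=\mathrm{col}\{\bm S_{t,1}(l),\dots,\bm S_{t,n}(l)\}$ and note $\bm S_{k+1}(l)=\mathscr{A}(\bm S_k(l)+\bm\Phi_k(l)\bm W_{k+1})$. Following the computation that produced (\ref{wnew5}) in \Cref{l5} verbatim but with the argument $l$ attached, one obtains, for every $l\geq 1$,
\begin{equation*}
\bm S_{t+1}^T(l)\bm P_{t+1}(l)\bm S_{t+1}(l)+\sum_{k=0}^{t}\bigl\|\bm S_k^T(l)\bar{\bm P}_{k+1}(l)\bm\Phi_k(l)\bigr\|^2
\leq 2\sum_{k=0}^{t}\bm S_k^T(l)\bar{\bm P}_{k+1}(l)\bm\Phi_k(l)\bm W_{k+1}+\sum_{k=0}^{t}\bm W_{k+1}^T\bm d_k(l)\bm\Phi_k^T(l)\bm P_k(l)\bm\Phi_k(l)\bm W_{k+1}.
\end{equation*}
Because the left-hand side of the target bound for sensor $i$ satisfies $\bm S_{t,i}^T(l)\bm P_{t,i}(l)\bm S_{t,i}(l)\leq \bm S_{t}^T(l)\bm P_{t}(l)\bm S_{t}(l)$ (diagonal block selection), it suffices to bound the right-hand side uniformly in $l$.

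The pure noise term on the right has already been controlled, uniformly in $l$, in the proof of \Cref{wl6} (see the chain leading to (\ref{we6})); specifically
\begin{equation*}
\max_{1\leq l\leq h_t}\sum_{k=0}^{t}\bm W_{k+1}^T\bm d_k(l)\bm\Phi_k^T(l)\bm P_k(l)\bm\Phi_k(l)\bm W_{k+1}=O(h_t\log t)+o(\{\eta(t)\log\log t\}^2),
\end{equation*}
where the $O(h_t\log t)$ piece comes from \Cref{wla} together with \Cref{wa3}, and the $o(\{\eta(t)\log\log t\}^2)$ piece from \Cref{wlb} applied to the centered sequence $\|\bm W_{k+1}\|^2-E[\|\bm W_{k+1}\|^2\mid\mathscr{F}_k]$ with the uniformly bounded multiplier $\lambda_{\max}\{\bm d_k(l)\bm\Phi_k^T(l)\bm P_k(l)\bm\Phi_k(l)\}\leq 1$.

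The cross term is the main obstacle: it is here that we must upgrade the classical martingale estimation used in \Cref{l5} to its uniform-in-$l$ counterpart \Cref{wlb}. The idea is to view $f_k(l)\triangleq \bm S_k^T(l)\bar{\bm P}_{k+1}(l)\bm\Phi_k(l)\bm d_k^{1/2}(l)$ and $v_{k+1}\triangleq \bm d_k^{-1/2}(l)\bm W_{k+1}$ after a conditional rescaling; more directly, split $\bm S_k^T(l)\bar{\bm P}_{k+1}(l)\bm\Phi_k(l)=\bigl(\bm S_k^T(l)\bar{\bm P}_{k+1}^{1/2}(l)\bigr)\cdot\bigl(\bar{\bm P}_{k+1}^{1/2}(l)\bm\Phi_k(l)\bigr)$, where the second factor has Euclidean norm bounded by $1$ in view of \Cref{wla} and (\ref{wnew4}). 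Apply \Cref{wlb} to the martingale $\sum_k \bar{\bm P}_{k+1}^{1/2}(l)\bm\Phi_k(l)\bm W_{k+1}$ (bounded multiplier, noise satisfying \Cref{wa2}) to conclude
\begin{equation*}
\max_{1\leq l\leq h_t}\,\max_{1\leq j\leq t}\Bigl\|\sum_{k=0}^{j}\bar{\bm P}_{k+1}^{1/2}(l)\bm\Phi_k(l)\bm W_{k+1}\Bigr\|^2=O(h_t\log t)+o(\{\eta(t)\log\log t\}^2).
\end{equation*}
By Cauchy--Schwarz the cross term is then absorbed into $\tfrac12\sum_{k=0}^{t}\|\bm S_k^T(l)\bar{\bm P}_{k+1}(l)\bm\Phi_k(l)\|^2$ on the left, plus a residual of the same uniform order. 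Combining the two uniform bounds and solving for $\bm S_{t+1}^T(l)\bm P_{t+1}(l)\bm S_{t+1}(l)$ yields the claimed rate for every sensor $i$, completing the proof.
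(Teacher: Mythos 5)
The paper itself omits this proof, remarking only that it ``can be derived by following the proof of Lemma \ref{l5}'', and your overall strategy --- stack the sensors, reproduce the inequality (\ref{wnew5}) with the lag $l$ attached, reuse the uniform bound (\ref{we6}) for the quadratic noise term, and upgrade the classical martingale estimate (\ref{wnew6}) to a double-array one --- is exactly the intended route. The stacking step, the reduction $\bm S_{t,i}^T(l)\bm P_{t,i}(l)\bm S_{t,i}(l)\leq \bm S_t^T(l)\bm P_t(l)\bm S_t(l)$, and the treatment of the pure noise term are all correct.

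However, your handling of the cross term --- which is precisely the step where the uniformity in $l$ is nontrivial --- has a genuine gap. First, Lemma \ref{wlb} cannot be applied with $f_k(l)=\bm S_k^T(l)\bar{\bm P}_{k+1}(l)\bm\Phi_k(l)$ because that weight is not uniformly bounded: one only has $\|f_k(l)\|\leq (\bm S_k^T(l)\bm P_k(l)\bm S_k(l))^{1/2}$, which is the very quantity being estimated, and your proposed rescaling by $\bm d_k^{-1/2}(l)$ makes the ``noise'' $v_{k+1}$ depend on $l$, which Lemma \ref{wlb} does not permit. Second, the fallback you actually use does not close: bounding $\max_l\max_j\|\sum_{k\leq j}\bar{\bm P}_{k+1}^{1/2}(l)\bm\Phi_k(l)\bm W_{k+1}\|$ gives no control over $\sum_k\bm S_k^T(l)\bar{\bm P}_{k+1}(l)\bm\Phi_k(l)\bm W_{k+1}$, since a sum of inner products with $k$-varying left factors is not dominated by the norm of a single partial sum; the Cauchy--Schwarz bound you would actually get is $(\sum_k\bm S_k^T(l)\bar{\bm P}_{k+1}(l)\bm S_k(l))^{1/2}(\sum_k\|\bar{\bm P}_{k+1}^{1/2}(l)\bm\Phi_k(l)\bm W_{k+1}\|^2)^{1/2}$, whose first factor is not absorbed by the term $\sum_k\|\bm S_k^T(l)\bar{\bm P}_{k+1}(l)\bm\Phi_k(l)\|^2$ available on the left of (\ref{wnew5}). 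Third, even your intermediate display is off: squaring the conclusion of Lemma \ref{wlb} yields $O((h_t\log t)^2)+o(\{\eta(t)\log\log t\}^2)$, not $O(h_t\log t)+o(\{\eta(t)\log\log t\}^2)$. What is needed is a uniform-in-$l$, self-normalized estimate of the form $\sum_k f_k(l)\bm W_{k+1}=o(\sum_k\|f_k(l)\|^2)+O(h_t\log t)+o(\{\eta(t)\log\log t\}^2)$ for adapted but unbounded $f_k(l)$, obtained for instance by truncating $f_k(l)$ at a deterministic level before invoking Lemma \ref{wlb} and treating the exceedance set separately; your proposal does not supply such an argument.
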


The above lemma can be derived by following the proof of \cref{l5}, and we omit it here.
%By (\ref{we6}), we have
%\bna
%&&\max_{1\leq l\leq h_t}\sum^{t-1}_{k=0}\bm W^T_{k+1}\bm \Phi^T_k(l)\bm P_k(l)\bm \Phi_k(l)\bm d_k(l)\bm W_{k+1}\nonumber\\
%&=&\max_{1\leq l\leq h_t}\sum^{t-1}_{k=0}\bm W^T_{k+1}\bm d_k(l)\bm \Phi^T_k(l)\bm P_k(l)\bm \Phi_k(l)\bm W_{k+1}\nonumber\\
%&=&o(\eta^2(t)\log\log t)+O(h_t\log t).\label{wnew7}
%\ena
%Hence (\ref{wnew5}),(\ref{wnew6})  and (\ref{wnew7}), we have
%\ban
%\max_{1\leq l\leq h_t}\bm S_t(l)\bm P_t(l)\bm S_t(l)=o(\eta^2(t)\log\log t)+O(h_t\log t).
%\ean
%which completes the proof.

The following theorem will establish the convergence of the estimates $\hat m_{t,i}, \hat p_{t,i},\hat q_{t,i}$ and $\bm\theta_{t,i}(\hat p_{t,i},\hat q_{t,i})$  given by \cref{algorithm3} to the true values.
\begin{theorem}\label{unknown}
Under \Cref{a1}, Assumptions \ref{wa2}-\ref{wa4}, we have  for any $i\in\{1,\cdots,n\}$,
\bna
&&\hat m_{t,i}\xrightarrow[{t\rightarrow\infty}] {} m_0 ~~{\rm a.s.}\label{wd1}\\
&&(\hat p_{t,i},\hat q_{t,i})\xrightarrow[{t\rightarrow\infty}] {} (p_0,q_0)~~{\rm a.s.}\label{wd2}\\
&&\bm\theta_{t,i}(\hat p_{t,i},\hat q_{t,i})\xrightarrow[{t\rightarrow\infty}] {}\bm\theta(p_0,q_0)~~{\rm a.s.}\label{wd3}
\ena
\end{theorem}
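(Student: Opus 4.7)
The plan is to follow the three-step template used in \Cref{t1} and \Cref{knownparameter}, but with an extra preliminary step that establishes the strong consistency of $\hat m_{t,i}$. Once this is in hand, Step 2 (selection of $\hat p_{t,i}$, $\hat q_{t,i}$) is reduced to minimization over the fixed finite range $\{0,\ldots,m_0\}$, so the contradiction argument of \Cref{t1} can be replayed with $M^*$ replaced by $\{(p_0,m_0),(m_0,q_0)\}$, $a_t$ replaced by $\bar a_t$, and the fixed-order lemmas \Cref{l3}, \Cref{l5} replaced by their uniform-in-$l$ counterparts \Cref{wl6}, \Cref{wl11}; Step 3 then follows immediately because $(\hat p_{t,i},\hat q_{t,i})$ is integer valued and eventually coincides with $(p_0,q_0)$, so $\bm\theta_{t,i}(\hat p_{t,i},\hat q_{t,i})=\bm\theta_{t,i}(p_0,q_0)$ and \Cref{wl6} combined with the condition $\bar a_t/\lambda^0_{\min}(t)\to 0$ of \Cref{wa4} delivers (\ref{wd3}).

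To prove (\ref{wd1}), I would fix a convergent subsequence $\hat m_{t_k,i}\to m'_i$ and argue by contradiction to rule out both $m'_i>m_0$ and $m'_i<m_0$. In the over-estimation case, expanding $\sigma_{t_k,i}(s,s,\bm\theta_{t_k,i}(s,s))$ as in (\ref{7}) at $s=m'_i$ and $s=m_0$, the quadratic form $\widetilde{\bm\Theta}^T\bm P^{-1}\widetilde{\bm\Theta}$ is controlled uniformly in $l\in[m_0,h_t]$ by \Cref{wl6} and the noise cross-term by \Cref{wl11}. Together with the analogue of (\ref{12}) for the $s=m_0$ side, these yield
$\bar L_{t_k,i}(m'_i,m'_i)-\bar L_{t_k,i}(m_0,m_0)\ge 2(m'_i-m_0)\bar a_{t_k}-C_1\bigl[h_{t_k}\log t_k+\eta^2(t_k)\log\log t_k\bigr]-C_2,$
and the first ratio in \Cref{wa4} forces the right-hand side to $+\infty$, contradicting the minimality of $\hat m_{t_k,i}$.

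For the under-estimation case $m'_i<m_0$, I would zero-pad $\bm\theta_{t_k,i}(m'_i,m'_i)$ to dimension $2m_0$ and compare it with the zero-padded true vector $\bm\theta(m_0,m_0)$; since $b_{p_0}\neq 0$ or $c_{q_0}\neq 0$ is missed, the analogue of (\ref{17}) yields $\|\widetilde{\bm\theta}_{t_k,i}(m_0,m_0)\|^2\ge\min\{b_{p_0}^2,c_{q_0}^2\}$, and the analogue of (\ref{order4}) gives $\lambda_{\min}(\bm P^{-1}_{t_k,i}(m_0,m_0))\ge a_{\min}\lambda^0_{\min}(t_k)$. Bounding the noise cross-term by \Cref{wl11} exactly as in (\ref{new8}) and (\ref{order3}), I obtain
$\bar L_{t_k,i}(m'_i,m'_i)-\bar L_{t_k,i}(m_0,m_0)\ge c_0 a_{\min}\lambda^0_{\min}(t_k)(1+o(1))-2(m_0-m'_i)\bar a_{t_k}-C_3,$
which tends to $+\infty$ by the second ratio in \Cref{wa4}, again a contradiction. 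This proves (\ref{wd1}), and (\ref{wd2})--(\ref{wd3}) follow by the reduction described in the first paragraph.

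The decisive technical obstacle is the over-estimation step for $\hat m_{t,i}$: since the candidate index $l$ ranges up to $h_t=O((\log t)^\alpha)$, the fixed-order noise estimate \Cref{l5} is insufficient, and one must invoke the double array martingale estimate \Cref{wlb} in order to produce the uniform bounds \Cref{wl6} and \Cref{wl11}. The heart of the proof is the tight matching, prescribed by \Cref{wa4}, between the cumulative noise rate $h_t\log t+\eta^2(t)\log\log t$ and the growth rate $\bar a_t$ of the penalty; without this matching neither the over-estimation nor the cooperative-excitation-driven under-estimation contradiction would go through.
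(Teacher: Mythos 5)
Your proposal follows essentially the same route as the paper: a two-sided contradiction for $\hat m_{t,i}$ (over-estimation handled by the uniform-in-$l$ bounds of \Cref{wl6} and \Cref{wl11} built on the double-array martingale lemma, under-estimation by zero-padding to dimension $2m_0$ and invoking the cooperative excitation lower bound $a_{\min}\alpha_0\lambda^0_{\min}(t)$), followed by a reduction of $(\hat p_{t,i},\hat q_{t,i})$ to the \Cref{t1} argument and the quotient bound $V_t(p_0,q_0)/\lambda^0_{\min}(t)\to 0$ for the parameters. One small fix: since $\hat m_{t,i}$ ranges over $\{0,\dots,[\log t]\}$, a convergent subsequence with finite limit $m'_i>m_0$ need not exist, so the over-estimation step should be stated as the paper does --- $\bar L_{t,i}(m_0,m_0)<\min_{m_0<l\leq \log t}\bar L_{t,i}(l,l)$ for all large $t$, which your inequality already yields because its constants are uniform in $l$ and $l-m_0\geq 1$.
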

\begin{proof}
We first show that $\lim\sup_{t\rightarrow\infty}\hat m_i(t)\leq m_0$ ~~a.s.

For $p>p_0$, $q>q_0$, set
\ban
&&\bm \theta(p,q)=[b_1,\cdots,b_p,c_1,\cdots,c_q]^T,\\
&&\widetilde{\bm\theta}_{t,i}(p,q)=\bm \theta(p,q)-\bm\theta_{t,i}(p,q),
\ean
 where $b_p=0, p>p_0$, $c_q=0, q>q_0$, $\bm\theta_{t,i}(p,q)$ is obtained by \cref{algorithm1}.

Then by (\ref{w4}), for $l\geq m_0$, we have
\bna
\sigma_{t,i}(l,l,\bm\theta_{t,i}(l,l))
&=&\sum^n_{j=1}\sum^{t-1}_{k=0}a^{(t-k)}_{ij}[ \widetilde{\bm\theta}^T_{t,i}(l)\bm\varphi_{k,j}(l)+w_{k+1,j}]^2\nonumber\\
&=&\widetilde{\bm\theta}^T_{t,i}(l)
\left(\sum^n_{j=1}\sum^{t-1}_{k=0}a^{(t-k)}_{ij}\bm\varphi_{k,j}(l)\bm\varphi^T_{k,j}(l)\right) \widetilde{\bm\theta}_{t,i}(l)\nonumber\\
&&+2\widetilde{\bm\theta}^T_{t,i}(l)\left(\sum^n_{j=1}\sum^{t-1}_{k=0}a^{(t-k)}_{ij}
\bm\varphi_{k,j}(l)w_{k+1,j}\right)+\sum^n_{j=1}\sum^{t-1}_{k=0}a^{(t-k)}_{ij}w^2_{k+1,j}\nonumber\\
&\triangleq& I_1+I_2+\sum^n_{j=1}\sum^{t-1}_{k=0}a^{(t-k)}_{ij}w^2_{k+1,j}.\label{we20}
\ena
In the following, we estimate $ I_1$, $I_2$ separately.

On the one hand, by  (\ref{www2}) (\ref{www1}) and (\ref{www11}), we have
\bna
 I_1+I_2&=&\widetilde{\bm\theta}^T_{t,i}(l)\left({{\bm P}}^{-1}_{t,i}(l)-\sum^n_{j=1}a^{(t)}_{ij}\bm P^{-1}_{0,j}(l)\right)\widetilde{\bm\theta}_{t,i}(l)\nonumber\\
&&+2\widetilde{\bm\theta}^T_{t,i}(l)\left(\sum^n_{j=1}a^{(t)}_{ij}\bm P^{-1}_{0,j}(l)\widetilde{\bm \theta}_{0,j}(l)-\bm P^{-1}_{t,i}(l)\widetilde{\bm \theta}_{t,i}(l)\right)\nonumber\\
&=&-\widetilde{\bm\theta}^T_{t,i}(l){{\bm P}}^{-1}_{t,i}(l)\widetilde{\bm\theta}_{t,i}(l)+2\widetilde{\bm\theta}^T_{t,i}(l)\left(\sum^n_{j=1}a^{(t)}_{ij}\bm P^{-1}_{0,j}(l)\widetilde{\bm \theta}_{0,j}(l)\right)\nonumber\\
&&-\widetilde{\bm\theta}^T_{t,i}(l)\left(\sum^n_{j=1}a^{(t)}_{ij}\bm P^{-1}_{0,j}(l)\right)\widetilde{\bm\theta}_{t,i}(l)
\nonumber\\
&\leq&-\widetilde{\bm\theta}^T_{t,i}(l){{\bm P}}^{-1}_{t,i}(l)\widetilde{\bm\theta}_{t,i}(l)+\sum^n_{j=1}\left(a^{(t)}_{ij}\widetilde{\bm\theta}^T_{0,j}(l)\bm P^{-1}_{0,j}(l)\widetilde{\bm\theta}_{0,j}(l)\right).\label{we100}
\ena
Hence by (\ref{we20}) and (\ref{we100}), we have for $l\geq m_0$,
\bna
\sigma_{t,i}(l,l,\bm\theta_{t,i}(l,l))-\sum^n_{j=1}\sum^{t-1}_{k=0}a^{(t-k)}_{ij}w^2_{k+1,j}\leq-\widetilde{\bm\theta}^T_{t,i}(l){{\bm P}}^{-1}_{t,i}(l)\widetilde{\bm\theta}_{t,i}(l)+O(1).\label{we101}
\ena

On the other hand, by \Cref{wl11}, we have
\ban
|I_2|
&\leq&2\Bigg\|\widetilde{\bm\theta}^T_{t,i}(l)\bm P^{-\frac{1}{2}}_{t,i}(l)\Bigg\|
\cdot\Bigg\|\bm P^{\frac{1}{2}}_{t,i}(l)
\left(\sum^n_{j=1}\sum^{t-1}_{k=0}a^{(t-k)}_{ij}\bm\varphi_{k,j}(l)w_{k+1,j}\right)\Bigg\|\nonumber\\
&\leq&O\left\{\Big\|\widetilde{\bm\theta}^T_{t,i}(l)\bm P^{-\frac{1}{2}}_{t,i}(l)\Big\|\cdot \{o([\eta(t)\log\log t]^2)+O(h_t\log t)\}^{\frac{1}{2}}\right\}.
%\label{wnew8}
\ean
Then for $l\geq m_0$ and sufficiently large $t$,  by (\ref{www2}), (\ref{we20}), and \Cref{wa4} ,
we have for some positive constant $C_6$,
\ban
&&\sigma_{t,i}(l,l,\bm\theta_{t,i}(l,l))-\sum^n_{j=1}\sum^{t-1}_{k=0}a^{(t-k)}_{ij}w^2_{k+1,j}\\
&\geq& \frac{1}{2}\widetilde{\bm\theta}^T_{t,i}(l){{\bm P}}^{-1}_{t,i}(l)\widetilde{\bm\theta}_{t,i}(l)\\
&&-C_6\left\{\Big\|\widetilde{\bm\theta}^T_{t,i}(l)\bm P^{-\frac{1}{2}}_{t,i}(l)\Big\|\cdot \{o([\eta(t)\log\log t]^2)+h_t\log t\}^{\frac{1}{2}}\right\}.
\ean
Hence by (\ref{wcriterion}) and \Cref{wl6}, we have for sufficiently large $t$,
\ban
&&\max_{m_0<l\leq\log t}\{\bar {L}_{t,i}(m_0,m_0)-\bar {L}_{t,i}(l,l)\}\\
&=&\max_{m_0<l\leq\log t}\Big\{\sigma_{t,i}(m_0,m_0,\bm\theta_{t,i}(m_0,m_0))-\sum^n_{j=1}\sum^{t-1}_{k=0}a^{(t-k)}_{ij}w^2_{k+1,j}\\
&&-\sigma_{t,i}(l,l,\bm\theta_{t,i}(l,l))+\sum^n_{j=1}\sum^{t-1}_{k=0}a^{(t-k)}_{ij}w^2_{k+1,j}-2(l-m_0)\bar{a}_t\Big\}\\
&\leq&\max_{m_0<l\leq\log t}O\left\{\Big\|\widetilde{\bm\theta}^T_{t,i}(l)\bm P^{-\frac{1}{2}}_{t,i}(l)\Big\|\{o([\eta(t)\log\log t]^2)+O(h_t\log t)\}^{\frac{1}{2}}\right\}+O(1)-2\bar{a}_t\\
&=&o([\eta(t)\log\log t]^2)+O(h_t\log t)+O(1)-2\bar{a}_t<0.
\ean
By the above equation, we have
\ban
\bar {L}_{t,i}(m_0,m_0)<\min_{m_0<l\leq \log t}\bar {L}_{t,i}(l,l),
\ean
which implies that $\lim\sup_{t\rightarrow\infty}\hat m_{t,i}\leq m_0$.

We now show that $\lim\inf_{t\rightarrow\infty}\hat m_{t,i}\geq m_0$ holds almost surely.
For any $l\leq m_0$,
 let us write $\bm \theta_{t,i}(l)$ given by \Cref{algorithm1} in its component form
\ban
\bm \theta_{t,i}(l)=[b^i_{1,t},\cdots, b^i_{l,t},c^i_{1,t},\cdots,c^i_{l,t}]^T\in\mathbb{R}^{2l}.
\ean
In order to avoid confusion, for any $l\leq m_0$, we denote the following $m_0$-dimensional vector,
\ban
\bm \theta_{t,i}(m_0)=[b^i_{1,t},\cdots, b^i_{m_0,t},c^i_{1,t},\cdots,c^i_{m_0,t}]^T\in\mathbb{R}^{2m_0},
\ean
where $ b^i_{j,t}=0, c^i_{j,t}=0$ for $m_0>j> l$.

For any $l\leq m_0$, we have
\ban
&&y_{k+1,j}-{\bm\theta}_{t,i}^T(l)\bm\varphi_{k,j}(l)=y_{k+1,j}-{\bm\theta}_{t,i}^{ T}(m_0)\bm\varphi_{k,j}(m_0)\\
&=&y_{k+1,j}-{\bm\theta}^T(m_0)\bm\varphi_{k,j}(m_0)+[{\bm\theta}^T(m_0)-{\bm\theta}_{t,i}^{ T}(m_0)]\bm\varphi_{k,j}(m_0)\\
&=&w_{k+1,j}+\widetilde{\bm \theta}^T_{t,i}(m_0)\bm\varphi_{k,j}(m_0),
\ean
where $\widetilde{\bm \theta}_{t,i}(m_0)={\bm\theta}(m_0)-{\bm\theta}_{t,i}(m_0)\in \mathbb{R}^{2m_0}$.

Hence by (\ref{w4}), we have for any $l\leq m_0$
\ban
\sigma_{t,i}(l,l,\bm\theta_{t,i}(l,l))
=\sum^n_{j=1}\sum^{t-1}_{k=0}a^{(t-k)}_{ij}(w_{k+1,j}+\widetilde{\bm \theta}^T_{t,i}(m_0)\bm\varphi_{k,j}(m_0))^2.
\ean

Thus,  we have for any $l\leq m_0$
\bna
&&\sigma_{t,i}(l,l,\bm\theta_{t,i}(l,l))
-\sum^n_{j=1}\sum^{t-1}_{k=0}a^{(t-k)}_{ij}w^2_{k+1,j}\nonumber\\
&=&\widetilde{\bm\theta}^T_{t,i}(m_0)
\left(\sum^n_{j=1}\sum^{t-1}_{k=0}a^{(t-k)}_{ij}\bm\varphi_{k,j}(m_0)\bm\varphi^T_{k,j}(m_0)\right) \widetilde{\bm\theta}_{t,i}(m_0)\nonumber\\
&&+2\widetilde{\bm\theta}^T_{t,i}(m_0)\left(\sum^n_{j=1}\sum^{t-1}_{k=0}a^{(t-k)}_{ij}\bm\varphi_{k,j}(m_0)w_{k+1,j}\right)
\triangleq J_1+J_2.\label{we9}
\ena
In the following, we estimate $J_1, J_2$.

For $l<m_0$, by the definition of $\widetilde{\bm \theta}_{t,i}(m_0)$, we have
\bna
\| \widetilde{\bm\theta}_{t,i}(m_0)\|^2\geq \min\{|b_{p_0}|^2,|c_{q_0}|^2\}= \alpha_0>0.\label{we8}
\ena
Then by (\ref{www2}), we have
\bna
\widetilde{\bm\theta}^T_{t,i}(m_0)
\bm P^{-1}_{t,i}(m_0)\widetilde{\bm\theta}_{t,i}(m_0)\geq a_{\min}\lambda^0_{\min}(t)\alpha_0.\label{ww2}
\ena
Moreover,    by  (\ref{mineig}), \cref{wa4} and \cref{wl6}, we have
\ban
&&\widetilde{\bm\theta}^T_{t,i}(m_0)\left(\sum^n_{j=1}a^{(t)}_{ij}\bm P^{-1}_{0,j}(m_0)\right)\widetilde{\bm\theta}_{t,i}(m_0)\\
&\leq& \lambda_{\max}\left(\sum^n_{j=1}a^{(t)}_{ij}\bm P^{-1}_{0,j}(m_0)\right)\|\widetilde{\bm\theta}_{t,i}(m_0)\|^2
=O(1).
\ean
Then for $l<m_0$, by (\ref{www2}), we obtain for some positive constant $C_7$,
\bna
J_1&=&\widetilde{\bm\theta}^T_{t,i}(m_0)
\bm P^{-1}_{t,i}(m_0)\widetilde{\bm\theta}_{t,i}(m_0)-\widetilde{\bm\theta}^T_{t,i}(m_0)
\left(\sum^n_{j=1}a^{(t)}_{ij}\bm P^{-1}_{0,j}(m_0)\right)\widetilde{\bm\theta}_{t,i}(m_0)\nonumber\\
&\geq&\widetilde{\bm\theta}^T_{t,i}(m_0)
\bm P^{-1}_{t,i}(m_0)\widetilde{\bm\theta}_{t,i}(m_0)-C_7.\label{we7}
\ena
By \Cref{wl11}, we have
\bna
|J_2|
&\leq&2\Bigg\|\widetilde{\bm\theta}^T_{t,i}(m_0)\bm P^{-\frac{1}{2}}_{t,i}(m_0)\Bigg\|
\cdot\Bigg\|\bm P^{\frac{1}{2}}_{t,i}(m_0)
\left(\sum^n_{j=1}\sum^{t-1}_{k=0}a^{(t-k)}_{ij}\bm\varphi_{k,j}(m_0)w_{k+1,j}\right)\Bigg\|\nonumber\\
&\leq&O\left\{\Big\|\widetilde{\bm\theta}^T_{t,i}(m_0)\bm P^{-\frac{1}{2}}_{t,i}(m_0)\Big\|\cdot \{o([\eta(t)\log\log t]^2)+O(h_t\log t)\}^{\frac{1}{2}}\right\}.
\label{wnew8}
\ena
Then by (\ref{ww2})-(\ref{wnew8}) and \cref{wa4}, we have for large $t$
\ban
J_1+J_2&\geq& \widetilde{\bm\theta}^T_{t,i}(m_0)
\bm P^{-1}_{t,i}(m_0)\widetilde{\bm\theta}_{t,i}(m_0)-C_8\Big\{\Big\|\widetilde{\bm\theta}^T_{t,i}(m_0)\bm P^{-\frac{1}{2}}_{t,i}(m_0)\Big\|\\
&&\cdot \{o([\eta(t)\log\log t]^2)+O(h_t\log t)\}^{\frac{1}{2}}\Big\} -C_7\\
&\geq& a_{\min}\alpha_0 \lambda^{0}_{\min}(t)(1+o(1)) ~{\rm a.s.},
\ean
where $C_8$ is a positive constant.

Hence by (\ref{we9}), we have for any  $l<m_0$,
\bna
\sigma_{t,i}(l,l,\bm\theta_{t,i}(l,l))\geq a_{\min} \alpha_0 \lambda^{0}_{\min}(t)(1+o(1))+\sum^n_{j=1}\sum^{t-1}_{k=0}a^{(t-k)}_{ij}w^2_{k+1,j}.\label{we13}
\ena
Note that  when $l=m_0$, by (\ref{we101}) and \Cref{wl6}, we have
\bna
&&\sigma_{t,i}(m_0,m_0,\bm\theta_{t,i}(m_0,m_0))\nonumber\\
&\leq&-\widetilde{\bm\theta}^T_{t,i}(m_0){{\bm P}}^{-1}_{t,i}(m_0)\widetilde{\bm\theta}_{t,i}(m_0)+\sum^n_{j=1}\sum^{t-1}_{k=0}a^{(t-k)}_{ij}w^2_{k+1,j}+O(1)\nonumber\\
& \leq& O(h_t\log t)+o([\eta(t)\log\log t]^2)+\sum^n_{j=1}\sum^{t-1}_{k=0}a^{(t-k)}_{ij}w^2_{k+1,j}+O(1).\label{we12}
\ena
For any  $l<m_0$, by (\ref{we13})-(\ref{we12}) and  \cref{wa4}, we have
\ban
&&\bar {L}_{t,i}(l,l)-\bar {L}_{t,i}(m_0,m_0)\\
&=&\sigma_{t,i}(l,l,\bm\theta_{t,i}(l,l))-\sigma_{t,i}(m_0,m_0,\bm\theta_{t,i}(m_0,m_0))
+2(l-m_0)\bar{a}_t\\
&\geq& a_{\min}\alpha_0 \lambda^{0}_{\min}(t)(1+o(1))-C_9((h_t\log t)+o([\eta(t)\log\log t]^2))+C_{10}-C_{11}\bar{a}_t\\
&=& a_{\min}\lambda^{0}_{\min}(t)(\alpha_0 +o(1))>0 \ \hbox{as}\ t\rightarrow \infty,
\ean
where $C_9,C_{10},C_{11}$ are positive constants.
Hence we have
\ban
\bar {L}_{t,i}(m_0,m_0)<\min_{1\leq l< m_0}\bar {L}_{t,i}(l,l),
\ean
which implies that $\lim\inf_{t\rightarrow\infty}\hat m_{t,i}\geq m_0~~ {\rm a.s.}$
Thus the first assertion (\ref{wd1}) has been proved.

By $\hat m_{t,i}\xrightarrow[{t\rightarrow\infty}] {} m_0$, the proof of (\ref{wd2}) can be carried out by a similar argument as that used in \cref{www12}.

Note that both the estimates $(p_{t,i}, q_{t,i})$  and the true orders $(p_0, q_0)$ are integers, we see that there exists a large enough $T$ such that $p_{t,i}=p_0$ and $q_{t,i}=q_0$ for $t\geq T$. By the proof of \Cref{wl6}, we have
\ban
V_t(p_0,q_0)=O(h_t\log t)+o(\eta^2(t)\log\log t).
\ean
Therefore,
\ban
\|\bm\theta_{t,i}(p_0,q_0)-\bm\theta(p_0,q_0)\|^2=\frac{O(h_t\log t)+o(\eta^2(t)\log\log t)}{\lambda^{0}_{\min}(t)}.
\ean
The convergence of the parameters can be obtained by \cref{wa4}. This completes the proof of the theorem.
\end{proof}

\begin{remark}
From \cref{unknown} (also \cref{t1} and \cref{knownparameter}), we see that the convergence of the estimates for both the system orders and parameters are derived without using the independency or stationarity assumptions on the regression vectors, which
makes it possible to apply our distributed algorithms to practical
feedback systems.
\end{remark}

\section{Conclusion}\label{section5}

In this paper, we proposed distributed algorithms to simultaneously estimate both the unknown system orders and parameters  by minimizing the LIC and using the distributed LS algorithm.  For the case where the  upper bounds of true orders are known, we show that the estimates of the parameters and the orders can converge to the true values under the cooperative excitation condition introduced in this paper. We note that the convergence results are obtained without using the independency and stationarity assumptions of regression vectors as commonly used in most existing literatures. Moreover, for the case where the upper bounds of true orders are unknown, we constructed similar distributed algorithm to estimate both the parameters and the orders by introducing a time-varying regression lag, and obtained the strong consistency of the distributed algorithm. The cooperative excitation condition can reveal the joint effect of multiple sensors. Many interesting
problems deserve to be further investigated, for example, the distributed order estimation problem of the autoregressive moving average model with exogenous inputs (ARMAX), the recursive distributed algorithm for the order estimation problem.

\bibliographystyle{siamplain}
\bibliography{references}

\begin{thebibliography}{10}
\bibitem{wwc8}%[1]
J. Fuchs, R. Weigel and M. Gardill,  Model order estimation using a multi-layer perceptron for direction-of-arrival estimation in automotive radar sensors,  {\it in Proceedings of IEEE Topical Conference on Wireless Sensors and Sensor Networks (WiSNet)}, Orlando, FL, USA, 2019, pp. 1-3.



\bibitem{wwc9}%[2]
P. W. Pande, B. R. Kumar, S.Chakrabarti, et al.,
Model order estimation methods for low frequency oscillations in power
systems, {\it Electrical Power and Energy Systems}, 115(2020), 105438.

\bibitem{ooorder3}%[15]
H. Bellahsene, A. Taleb-ahmed, ARMA order model detection using minimum
of Kurtosis: application on seismic data, {\it Arabian Journal of Geosciences}, 11(2018), pp. 1-7.


\bibitem{ooorder4}%[16]
M. -B. Hossain, J. Moon and K. H. Chon, Estimation of ARMA model order via artificial neural network for modeling physiological systems, {\it IEEE Access},  8(2020), pp. 186813-186820.

\bibitem{wwc4}%[9]
H. Akaike,  A new look at the statistical model identification, {\it IEEE
Transactions on Automatic Control}, 19(1974), pp. 716-723.


\bibitem{wwc5}%[10]
D. L. Weakliem, A critique of the Bayesian information criterion for model selection, {\it Sociological Methods $\&$ Research}, 27(1999), pp. 359-397.

\bibitem{wc1}%[12]
L. Guo, H. F. Chen and J. F. Zhang, Consistent order estimation for lnear stochastic feedback control system (CARMA model), {\it Automatica}, 25(1989), pp. 147-151.

\bibitem{wwc3}%[11]
H. F. Chen, L. Guo,  Consistent estimation of the order of stochastic control
systems,  {\it IEEE Transactions on Automatic Control},  32(1987), pp. 531-535.


\bibitem{wwc12}%[3]
H. M. Zhang, P. Wang, A new way to estimate orders in time series, {\it Journal of Time Series Analysis},  15(1994), pp. 545-559.

\bibitem{wwc14}%[4]
E. M. Hemerly,  M. H. A. Davis, Strong consistency of the PLS criterion for order
determination of autoregressive processes, {\it The Annals of Statistics}, 17(1989), pp.  941-946.


\bibitem{wwc7}%[5]
H. M. Zhang, Improved consistent estimation of the order of
stochastic control systems,  {\it IEEE Transactions on Automatic Control},  35(1990), pp. 97-100.


\bibitem{wwc11}%[6]
E. J. Hannan, The estimation of the order of an ARMA process, {\it The Annals of Statistics},  8(1980), pp. 1071-1081.

\bibitem{wwc13}%[7]
A. Al-Smadi, D. M. Wilkes,
Data-adaptive higher order ARMA model order estimation, {{\it in Proceedings of  IEEE Southeastcon '95. Visualize the Future}}, Raleigh, NC, USA,  March 1995, pp. 210-213.

\bibitem{ooorder5}%[19]
E. J. Hannan and L. Kavalieris,
Multivariate linear time series models, {\it Advances in Applied Probability}, 16(1984), pp. 492-561.

 \bibitem{wc5}%[21]
D. Huang and L. Guo, Estimation of nonstationary ARMAX models based on the Hannan-Rissanen method, {\it The Annals of Statistics}, 18(1990), pp. 1729-1756.


\bibitem{wwc16}%[18]
B. Ninness,
On the CRLB for combined model and model-order estimation of stationary stochastic processes, {\it IEEE Signal Processing Letters},  11(2004), pp. 293-296.


\bibitem{wwc6}%[8]
W. X. Zhao, H. F. Chen, New method of order estimation for ARMA/ARMAX
processes, {\it SIAM Journal on Control and Optimization},  48(2010), pp. 4157-4176.

\bibitem{wwc15}%[17]
A. Al-Smadi and D. M.Wilkes,
Robust and accurate ARX and ARMA model order stimation of non-Gaussian processes, {\it IEEE Transactions on  Signal Processing}, 50(2002), pp.  759-763.

\bibitem{wwc17}%[20]
A.P. Liavas, P.A. Regalia, J.-P. Delmas,
Blind channel approximation: effective channel order determination,
 {\it IEEE Transactions on  Signal Processing}, 47(1999), pp.  3336-3344.

\bibitem{ooorder1}%[13]
Z. S. Abo-Hammour , O. M.K. Alsmadi , A. M. Al-Smadi, et al., ARMA model order and parameter estimation
using genetic algorithms, {\it Mathematical and Computer Modelling of Dynamical
Systems},  18(2012), pp. 201-221.


\bibitem{ooorder2}%[14]
J. Moon, M. B. Hossain, K. H. Chon, AR and ARMA model order selection for time-series modeling with
ImageNet classification, {\it Signal processing}, 183(2021), 108026.


 \bibitem{wwc31}%[22]
I. Akyildiz, W. Su, Y. Sankarasubramaniam, et al, Wireless
sensor networks: a survey,  {\it Computer Networks}, 38(2002), pp. 393-422.



\bibitem{wwc21}% %[23]�ֲ�ʽӦ��
N. Bogdanovi\'{c}, J. Plata-Chaves, K. Berberidis, Distributed incremental-based LMS for node-specific adaptive parameter estimation, { \it IEEE Transactions on Signal Processing},
 62(2014), pp. 5382-5397.


\bibitem{wwc22}%[24]
S. Y. Xie, L. Guo, Analysis of normalized least mean quares-based consensus adaptive filters under a general information condition, { \it SIAM Journal on Control and Optimization},  56(2018), pp. 3404--343.

\bibitem{new}
A. N. Sadigh, H. S. Yazdi, A. Harati, Diversity-based diffusion robust RLS using adaptive forgetting factor,
{\it Signal processing}, 182(2021), 107950.




\bibitem{wwc26}%[31]
D. Gan, Z. X.  Liu,
Strong consistency of the distributed stochastic gradient algorithm, {{\it in Proceedings of  IEEE 58th Conference on Decision and Control}}, Nice, France,  December 11-13, 2019, pp. 5082-5087.

\bibitem{LMS}%[34]
R. Abdolee, B. Champagne, Diffusion LMS strategies in sensor networks with noisy input data, {\it IEEE/ACM Transactions on networking},  24(2016), pp. 3-14.

\bibitem{LMS2}%[27]
A. Khalili, M. A. Tinati, A. Rastegarnia, and J. A. Chanbers, Steady-state
analysis of diffusion LMS adaptive networks with noisy links, { \it  Signal Processing},  60(2012), pp. 974-979.


\bibitem{SG}%[26]
F. Barani, A. Savadi , H. S. Yazdi, Convergence behavior of diffusion stochastic gradient descent algorithm, { \it IEEE Transactions on Signal Processing}, 183(2021), 108014.

\bibitem{wwc25}%[25]
G. Q. Wang, N. Li, Y. G. Zhang, Diffusion distributed Kalman filter over sensor networks without exchanging raw measurements, { \it Signal Processing},
132(2017), pp. 1-7.

\bibitem{wwc30}%[28]
D. Gan, S. Y. Xie, Z. X. Liu, Stability of the distributed Kalman filter using
general random coefficients,
{\it Science China Information Sciences},  64(2021), 172204.


\bibitem{wwc29}%[29]
R. Arablouei, K. Dogancay, S. Werner, et al, Adaptive distributed estimation based on recursive least-squares and partial diffusion, { \it IEEE Transactions on Signal Processing}, 62(2014), pp. 3510-3522.

\bibitem{oorder2}%[32]
Y. Yu, H. Zhao, R. C. de Lamare, et al., Robust distributed
diffusion recursive least squares algorithms with side information
for adaptive networks, {\it IEEE Transactions on Signal Processing},  67(2019),
 pp. 1566-1581.

\bibitem{oorder3}%[33]
V. Vahidpour, A. Rastegarnia, A. Khalili, et al., Analysis of
partial diffusion recursive least squares adaptation over noisy links, {\it IET
Signal Processing}, 11(2017), pp. 749-757.

\bibitem{ooorder6}%[30]
L. Zhang, Y. L. Cai, C. G. Li , et al, Variable forgetting factor mechanisms for
diffusion recursive least squares algorithm in
sensor networks, { \it EURASIP Journal on Advances
in Signal Processing}, 57(2017), doi: 10.1186/s13634-017-0490-z.

\bibitem{wc2}%[35]
S. Y. Xie, Y. Q. Zhang and L. Guo, Convergence of a Distributed Least Squares,  {\it IEEE Transactions on Automatic Control}, doi: 10.1109/TAC.2020.3047989, 2020.

\bibitem{tac}%[35]
S. Y. Xie,  L. Guo, Analysis of distributed adaptive filters based on diffusion strategies over sensor networks,  {\it IEEE Transactions on Automatic Control}, 63(2018), pp. 3643-3658.


\bibitem{wc3}%4
L. Guo, Time-varying stochastic systems: stability and adaptive theory,  2nd ed., { \it Science Press}, 2020.

\bibitem{order1}
V. Breschi, A. Bemporad, I. V. Kolmanovsky, Cooperative constrained parameter estimation by ADMM-RLS, {\it Automatica}, 121(2020), pp. 1-14.

\bibitem{c7}
J. L. Lei, H. F. Chen, Distributed estimation for parameter in heterogeneous linear time-varying models with observations at network sensors, {\it Communications in Information and Systems}, 15(2015), pp. 423-451.

\bibitem{wc4}%4
 H. F. Chen, L. Guo, Identification and stochastic adaptive control, { Birkh\"{a}suser}, 1991.
\end{thebibliography}

\end{document}